\DeclareMathOperator{\corr}{corr}
\DeclareMathOperator{\skel}{skel}
\DeclarePairedDelimiterX\set[1]\lbrace\rbrace{#1}
\title{An SPQR-Tree-Like Embedding Representation for Upward Planarity}
\author{Guido Br\"uckner\inst{1} \and
        Markus Himmel\inst{1} \and
        Ignaz Rutter\inst{2}}
\institute{Karlsruhe Institute of Technology \and
  University of Passau}
\begin{document}

    \maketitle

    \begin{abstract}
        The SPQR-tree is a data structure that compactly represents all planar embeddings of a biconnected planar graph.
        It plays a key role in constrained planarity testing.

        We develop a similar data structure, called the UP-tree, that compactly represents all upward planar embeddings of a biconnected single-source directed graph.
        We demonstrate the usefulness of the UP-tree by solving the upward planar embedding extension problem for biconnected single-source directed graphs.
    \end{abstract}

    \section{Introduction}

        A natural extension of planarity to directed graphs (digraphs) is to
        consider planar drawings where each edge is drawn as a~$y$-monotone
        curve.  Such drawings are called \emph{upward planar}, and a graph
        admitting an upward planar drawing is \emph{upward planar}.  A
        planar (combinatorial) embedding~$\mathcal E$ of a graph~$G$ is an \emph{upward planar
          embedding} if~$G$ has an upward planar drawing whose (combinatorial) embedding
        is~$\mathcal E$.
        Whereas undirected graphes can be tested for planarity in linear time,
        upward planarity testing is \textsf{NP}-complete in general, though there are efficient algorithms for graphs with a single source~\cite{hutton1996upward,bertolazzi1998optimal} and graphs with a fixed embedding~\cite{bertolazzi1994upward}.
        In the special case of \emph{$st$-graphs}, i.e., graphs with a single
        source~$s$ and a single sink~$t$ with~$s$ and~$t$ on the same face, 
        planar embeddings are the same as the upward planar
        embeddings~\cite{platt1976planar}, and hence upward
        planarity and planarity are equivalent.

        A related but different planarity notion for digraphs is \emph{level planarity}, where the vertices of the graph have fixed levels that correspond to horizontal lines in the drawing.
        The task is to order the vertices on each level so that the drawing is planar.
        Level planarity can be tested in linear time~\cite{jl-lpeilt-99} by a quite involved algorithm, or in quadratic time by several simpler algorithms~\cite{brs-lptec-18,rsbhkmsc-asfopolg-01,fpss-htmdalp-13}.

        In a constrained embedding problem, one seeks a planar embedding
        of a given graph that satisfies additional constraints.  Typical
        examples are simultaneous embeddings with fixed
        edges~\cite{bkr-sepg-14}, cluster planarity~\cite{fce-pcg-95},
        constraints on the face sizes~\cite{djkr-pesuf-14,dr-aafcp-18}, optimizing the depth of the
        embedding~\cite{adp-fmdep-11} and optimizing the bends in an orthogonal
        drawing~\cite{blr-ogdie-16,brw-oogdc-16,dlp-bmodq-18}.  One of the
        most prominent examples of the last years is the \emph{partial
          drawing extension} problem, which asks whether a given drawing
        of a subgraph can be extended to a planar drawing of the whole
        graph.  The \emph{partial embedding extension}
        problem is strongly related, here the input is a planar embedding of a subgraph and
        the question is whether it can be extended to a planar embedding
        of the whole graph.  For undirected planar graphs the two problems
        are equivalent and can be solved in linear
        time~\cite{adfjk-tppeg-15,jkr-ktppe-13}.

        One of the key tools for all of these applications is the
        SPQR-tree, which compactly represents all planar embeddings of a
        biconnected planar graph~$G$ and breaks down the complicated task
        of choosing a planar embedding of~$G$ into simpler independent
        embedding choices of its triconnected components~\cite{ml-ascopcg-37,t-cig-66,hopcroft1973dividing,dbt-ipt-89,dbt-olgawst-90,dt-omtcs-96}.  In fact, these embeddings are either uniquely
        determined up to reversal, or they consist in arbitrarily choosing a
        permutation of parallel edges between two pole vertices.  The
        common approach for attacking the above-mentioned constrained
        embedding problems is to project the constraints on the global
        embedding to local constraints on the skeleton embeddings that can
        then be satisfied by consistent local choices.  While the
        implementation details are often highly technical and non-trivial, the
        approach has proven to be extremely successful.

        In comparison, relatively little is known about constrained
        planarity problems for planarity notions of digraphs.
        Br\"uckner and Rutter~\cite{br-pclp-17} study the problem of
        extending a given partial drawing of a level graph and Da Lozzo et
        al.~\cite{ddf-eupgd-19} study the same question for upward
        planarity.  In general, extending a given partial upward planar
        drawing requires to determine an upward planar embedding that (i)
        extends the embedding of the partial drawing, and (ii) admits a
        drawing that extends the given drawing.  Here step (i) requires
        solving the embedding extension problem but with additional
        constraints that ensure that a drawing extension is feasible.  It
        is worth noting that for upward planarity the embedding extension
        problem and the drawing extension problem are distinct; Da Lozzo
        et al. show that, generally, even if an upward planar embedding of
        the whole graph is given, it is NP-complete to decide whether it
        can be drawn such that it extends a given partial
        drawing~\cite[Theorem 2]{ddf-eupgd-19}.  On the positive side,
        they present tractability results for directed paths and cycles
        with a given upward planar embedding, and for~$st$-graphs.  The
        restriction to~$st$-graphs allows a relatively simple
        characterization of the upward planar embeddings that extend the
        given partial drawings~\cite[Lemma 6]{ddf-eupgd-19}, which yields
        an~$O(n \log n)$-time algorithm for step (ii).  For step (i), Da
        Lozzo et al.~exploit the fact that for~$st$-graphs, the choice of an
        upward planar embedding is equivalent to choosing a planar
        embedding, and hence the SPQR-tree allows to efficiently search
        for an upward planar embedding satisfying the additional
        constraints required by condition (ii).

        In this paper, we seek to generalize the approach of Da Lozzo et
        al.~to biconnected single-source graphs.  The key difficulty in
        this case is that neither do we have access to all the upward
        planar embeddings of such graphs, nor is it known what the
        necessary and sufficient conditions are for an upward planar embedding
        to admit a drawing that extends a given subdrawing.

        \paragraph{Contribution and Outline.} We construct a novel SPQR-tree-like embedding representation,
        called the UP-tree, that represents exactly the upward planar
        embeddings of a biconnected single-source graph. As in SPQR-trees, the embedding
        choices in the UP-tree are broken down into independent embedding choices of
        skeleton graphs that are either unique up to reversal or allow to
        arbitrarily permute parallel edges between two poles.
        As such, UP-trees can take the role of SPQR-trees for constrained embedding problems in upward planarity, making them a powerful tool with a broad range of applications.
        We demonstrate this by giving an quadratic-time algorithm for the upward planar embedding extension problem for biconnected single-source graphs.

        After introducing some preliminaries in Section~\ref{sec:preliminaries}, we review the results on decomposing upward planar single-source digraphs due to Hutton and Lubiw~\cite{hutton1996upward} in Section~\ref{sec:decomposition-trees-and-upward-planar-embeddings}.
        We proceed to extend this idea from a single decomposition to decomposition trees.
        Proofs of statements marked with a star ($\star)$ can be found in the appendix.
        In Section~\ref{sec:up-trees}, we define the UP-tree and in Section~\ref{sec:partial-upward-embedding} we use it to solve the partial upward embedding extension problem.

    \section{Preliminaries}
    \label{sec:preliminaries}

        Let~$G = (V, E)$ be a connected simple undirected graph.
        A \emph{cutvertex} of~$G$ is a vertex whose removal disconnects~$G$.
        We say that~$G$ is \emph{biconnected} if it has no cutvertex.
        We say that~$\{u, v\}$ is a \emph{cutpair} if there are connected subgraphs~$H_1, H_2$ of~$G$ with~$H_1 \cup H_2 = G$ and~$H_1 \cap H_2 = \{u, v\}$.
        If a graph has no cutpair it is \emph{triconnected}.

        \paragraph{Decomposition Trees.}
        Assume that~$G$ is biconnected.
        A \emph{decomposition} along a cutpair~$\{u, v\}$ of~$G$ is defined as follows.
        Let~$\mu_1, \mu_2$ be two nodes connected by an undirected arc~$(\mu_1, \mu_2)$.
        Node~$\mu_i$ is equipped with a multigraph~$H_i \cup \{(u, v)\}$ called its \emph{skeleton} denoted by~$\skel(\mu_i)$.
        The newly added edge~$(u, v)$ is a \emph{virtual edge} and corresponds to~$\mu_2$ in~$\mu_1$ and to~$\mu_1$ in~$\mu_2$, respectively.
        This is formalized as functions~$\corr_{\mu_1}\colon (u, v) \mapsto \mu_2$ and~$\corr_{\mu_2}\colon (u, v) \mapsto \mu_1$.
        If there exists a cutpair~$\{u', v'\}$ in~$\skel(\mu_i)$ and we may once again decompose along that cutpair.
        By repeating this process we obtain an unrooted \emph{decomposition tree}~$\mathcal T$.

        Let~$a = (\mu, \nu)$ be an arc of~$\mathcal T$.
        Then~$\skel(\mu)$ and~$\skel(\nu)$ share two vertices~$u, v$ and the existence of~$a$ can be traced back to a decomposition along~$u, v$.
        We then say that the \emph{poles} of~$\mu$ in~$\nu$ are~$u$ and~$v$.
        When~$\nu$ is clear from the context we also simply refer to~$u$ and~$v$ as the poles of~$\mu$.

        A decomposition can be reverted by contracting an arc~$(\mu, \nu)$ of~$\mathcal T$ and merging the skeletons of~$\mu$ and~$\nu$.
        To merge~$\skel(\mu)$ and~$\skel(\nu)$, remove from~$\skel(\mu)$ the virtual edge~$e$ with~$\corr_\mu(e) = \nu$ and from~$\skel(\nu)$ the virtual edge~$e'$ with~$\corr_\nu(e') = \mu$ and set the union of these two graphs as the skeleton of the node obtained by contracting~$(\mu, \nu)$ in~$\mathcal T$.
        This is a \emph{composition} along~$(\mu, \nu)$.

        Consider an arc~$a = (\mu, \nu)$ of~$\mathcal T$.
        Removing~$a$ from~$\mathcal T$ separates~$\mathcal T$ into two subtrees~$\mathcal T_\mu$ and~$\mathcal T_\nu$ containing~$\mu$ and~$\nu$, respectively.
        Define the \emph{pertinent graph} of~$\mu$ in~$\nu$ as the skeleton of the single node obtained by contracting all arcs in~$\mathcal T_\mu$.
        Again, when~$\nu$ is clear from the context we simply refer to this graph as the pertinent graph of~$\mu$ and denote it by~$G(\mu)$.

        \paragraph{Rooted Decomposition Trees and Planar Embeddings.}
        Throughout this paper let an \emph{embedding} of a graph denote a rotation system together with an outer face.
        Decomposition trees can be used to decompose not only a graph, but also an embedding of it.
        Consider a biconnected graph~$G$ together with a planar embedding~$\mathcal E$.
        Let~$e^\star$ be an edge of~$G$ incident to the outer face of~$\mathcal E$.
        Further, let~$\mathcal T$ be a decomposition tree of~$G$ rooted at a node whose skeleton contains~$e^\star$.
        Equip the skeleton of each node~$\mu$ of~$\mathcal T$ with an embedding as follows.
        The embedding of~$\skel(\mu)$ is obtained from~$\mathcal E$ by contracting for each virtual edge~$(u, v)$ of~$\skel(\mu)$ the pertinent graph~$G(\corr_\mu(u, v))$ into a single edge.
        These embeddings of the skeletons of the nodes of~$\mathcal T$ are referred to as a \emph{configuration}.
        The fact that~$e^\star$ is incident to the outer face gives two properties.
        First, the edge~$e^\star$ lies on the outer face of the skeleton of the root node of~$\mathcal T$.
        Second, every non-root node~$\nu$ of~$\mathcal T$ has some parent node~$\mu$ and the virtual edge~$e$ with~$\corr_\nu(e) = \mu$ lies on the outer face of~$\skel(\nu)$.
        We extend our notion of a \emph{configuration} to any set of embeddings of the skeletons of the nodes of~$\mathcal T$ that fulfills these two properties.

        Recall that decomposition trees allow for (graph) composition along arcs.
        We can also compose embeddings.
        When contracting an arc~$(\mu, \nu)$, we merge~$\skel(\mu)$ and~$\skel(\nu)$ as described above.
        Obtain the embedding of the merged skeleton by replacing the occurrences of the virtual edge in the rotation system around its poles by the appropriate rotation system in the embedding of the other skeleton.
        This means that~$G$ together with a planar embedding can be decomposed into a decomposition tree~$\mathcal T$ together with a configuration.
        And symmetrically,~$\mathcal T$ together with any configuration can be composed into a planar embedding of~$G$.

        \paragraph{SPQR-Trees.}
        As described in the previous paragraph, decomposition trees separate independent choices in finding planar embeddings of a graph.
        We may either choose an embedding of the entire graph, which is generally very complex, or we may decompose the graph into smaller skeletons, independently choose embeddings of these skeletons and compose them into an embedding of the entire graph.
        In this sense decomposition trees implement a tradeoff between making few complex choices or many simple choices.

        The \emph{SPQR-tree} is a decomposition tree that makes this tradeoff in favor of many simple choices.
        SPQR-trees have four kinds of nodes, all of whose skeletons offer only few and well-structured embedding choices.
        \begin{enumerate*}[label=(\roman*)]
            \item R-nodes are nodes whose skeleton is triconnected.
                  Such skeletons have a unique planar embedding up to flipping.
            \item S-nodes are nodes whose skeleton is a simple cycle.
                  Such skeletons offer no embedding choice (recall that the outer face is fixed by the rooting).
                  Adjacent S-nodes are contracted into one larger S-node, i.e., an S-node whose skeleton is a larger simple cycle.
                  This means that in SPQR-trees no two S-nodes are adjacent.
            \item P-nodes are nodes whose skeleton is a multigraph that consists of two vertices connected by three or more edges.
                  The order of these edges may be arbitrarily permuted.
                  Again, adjacent P-nodes are contracted into one larger P-node, i.e., no two P-nodes are adjacent.
            \item Q-nodes are nodes whose skeleton consists of two vertices connected by two edges, namely one virtual edge and one non-virtual edge.
                  They offer no embedding choice.
                  Note that only the skeletons of Q-nodes contain non-virtual edges.
        \end{enumerate*}
        See Fig.~\ref{fig:example}~(a) and~(b) for a graph and its SPQR-tree decomposition.

    \section{Decomposition Trees and Upward Planar Embeddings}
    \label{sec:decomposition-trees-and-upward-planar-embeddings}

        Recall from the previous section that for biconnected graphs we can decompose any planar embedding into planar embeddings of the skeletons of a decomposition tree; and symmetrically, we can compose a planar embedding of the whole graph from planar embeddings of the skeletons.
        In this section we find a similar relationship between upward planar embeddings of a biconnected single-source digraph~$G$ and upward planar embeddings of the skeletons of a suitably-defined decomposition tree of~$G$.

        \subsection{Decompositions and Upward Planar Embeddings}
        \label{ssec:decompositions-and-upward-planar-embeddings}

        In this section we review the decomposition result of Hutton and Lubiw and formulate the interface between their result and our results.

        Let~$G$ be a biconnected single-source digraph together with an upward planar embedding~$\mathcal E$.
        Further, let~$e^\star$ denote the edge around the source of~$G$ that is leftmost in~$\mathcal E$.
        Now let~$H_1, H_2$ be two subgraphs of~$G$ with
        \begin{enumerate*}[label=(\roman*)]
            \item $H_1 \cup H_2 = G$,
            \item $H_1 \cap H_2 = \{u, v\}$,
            \item $e^\star \in H_1$ and
            \item $H_1 \setminus \{u, v\}$ or~$H_2 \setminus \{u, v\}$ is connected.
        \end{enumerate*}

        Hutton and Lubiw construct two graphs~$H_1'$ and~$H_2'$ from~$H_1$ and~$H_2$ by including one of the \emph{markers} shown in Fig.~\ref{fig:markers}.
        \begin{figure}[t]
            \centering
            \includegraphics{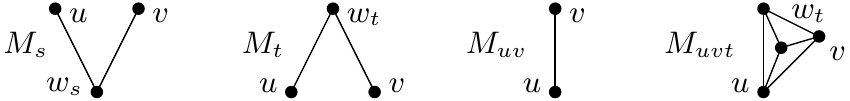}
            \caption{
                The four markers used by Hutton and Lubiw.
                The markers are digraphs; in the figure, all edges are directed upward.
            }
            \label{fig:markers}
        \end{figure}
        Markers are simple digraphs with two vertices~$u, v$ that connect the marker to the remaining graph.
        The marker in~$H_1'$ is designed to represent~$H_2$ and the marker in~$H_2'$ is designed to represent~$H_1$.
        If there exists a directed path from~$u$ to~$v$ we say that~$u$ \emph{dominates}~$v$ and write~$u < v$ for short.
        Otherwise~$u$ and~$v$ are \emph{incomparable}.
        The vertex~$v$ is a \emph{source} if it has no incoming edges in~$G$, a \emph{sink} if it has no outgoing edges in~$G$ and an \emph{internal} vertex if it has both incoming and outgoing edges in~$G$.
        Markers are determined based on whether~$u < v$ and whether~$v$ is a source, sink or internal vertex in~$H_1$ and~$H_2$:
        If~$u$ and~$v$ are incomparable in~$G$, set~$H_1' = H_1 \cup M_t$ and~$H_2' = H_2 \cup M_s$.
        Otherwise, assume~$u < v$.
        Define~$H_1'$ as follows.
        If~$v$ is a source in~$H_2$ set~$H_1' = H_1 \cup M_t$.
        If~$v$ is a sink in~$H_2$ set~$H_1' = H_1 \cup M_{uv}$.
        Otherwise~$v$ is an internal vertex in~$H_2$ and we set~$H_1' = H_1 \cup M_{uvt}$.
        Define~$H_2'$ as follows.
        If~$v$ is a source in~$H_1$ set~$H_2' = H_2 \cup M_t$, otherwise set~$H_2' = H_2 \cup M_{uv}$.
        See Fig.~\ref{fig:marker-replacement} in the appendix for example decompositions.

        Recall that decomposition trees of planar graphs allow for (de-)composition of planar embeddings.
        Hutton and Lubiw provide a similar property for the graphs~$G, H_1'$ and~$H_2'$.

        \begin{restatable}[$\star$, implicit in \cite{hutton1996upward}]{theorem}{theoremUpwardPlanarConfigurations}
            Let~$\mathcal E$ be an upward planar embedding of~$G$ with~$e^\star$ as the leftmost edge around~$s$.
            Then~$\mathcal E$ induces upward planar embeddings~$\mathcal F_1, \mathcal F_2$ of~$H_1', H_2'$, respectively with the following properties.
            In~$\mathcal F_1$,~$e^\star$ is the leftmost edge around the source of~$H_1'$.
            In~$\mathcal F_2$, the edges of the marker are leftmost around the source of~$H_2'$.
            Conversely, if~$\mathcal{F}_1$ and~$\mathcal{F}_2$ are upward planar embeddings of~$H_1'$ and~$H_2'$
            such that~$e^\star$ is the leftmost edge around the source of~$H_1'$ and the edges
            of the marker are leftmost around the source of~$H_2'$, then the composition
            of these embeddings is upward planar.
            \label{thm:upward-planar-configurations}
      \end{restatable}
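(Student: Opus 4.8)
The plan is to prove both implications by surgery on upward planar drawings, which is in essence the argument behind Hutton and Lubiw's decomposition of upward planar single-source digraphs; the additional work is to arrange the surgery so that it respects the choice of outer face and tracks the leftmost edge around the source. Throughout, we use that by property~(iv), $\{u,v\}$ is a cutpair of the biconnected graph $G$ such that in any drawing realizing a planar embedding of $G$ there is a simple closed curve $\gamma$ through $u$ and $v$, meeting the drawing only there, with $H_1$ in the closed disk on one side and $H_2$ in the closed disk on the other; in particular the edges of $H_i$ at $u$ (and at $v$) form a contiguous block of the rotation.

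For the decomposition direction, fix an upward planar drawing $\Gamma$ of $G$ realizing $\mathcal E$ in which $e^\star$ is the leftmost edge around $s$, and let $\gamma$ be as above. Define $\mathcal F_1$ by deleting $H_2\setminus\{u,v\}$ from $\mathcal E$ and drawing the marker $M$ of $H_1'$ into the freed-up region with $u$ and $v$ kept in place, and define $\mathcal F_2$ symmetrically from $H_1$ and the marker of $H_2'$. That these are planar and are exactly the skeleton embeddings induced by $\mathcal E$ is clear from the construction, so the two points to establish are $y$-monotonicity and the leftmost conditions. For $y$-monotonicity, each marker has at most one vertex besides $u$ and $v$, so the only degrees of freedom in a $y$-monotone drawing of it are the vertical order of $u$ and $v$ and the up/down pattern of its edges at $u$ and at $v$; the case distinction defining the markers is designed to match exactly these to the interface of $H_2$ (resp.\ $H_1$) across $\gamma$ — if $u<v$ in $G$ then $y(u)<y(v)$ in $\Gamma$, and the edges of $H_2$ at $v$ all point up, all point down, or do both precisely when $v$ is a source, a sink, or an internal vertex of $H_2$, while incomparability is handled jointly by $M_t$ on one side and $M_s$ on the other. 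This is the content of Hutton and Lubiw's decomposition lemma, which we invoke; it guarantees that each of $M_t,M_s,M_{uv},M_{uvt}$ can be drawn $y$-monotonically in a thin neighbourhood of an arc of $\gamma$. For the leftmost conditions, every edge at $s$ lies in $H_1$, so if $s\notin\{u,v\}$ the rotation at $s$ and its incidence with the outer face are untouched in $\mathcal F_1$, while if $s\in\{u,v\}$ then $e^\star$ is still the first edge of the (unchanged) $H_1$-block at $s$; either way $e^\star$ stays leftmost. In $\mathcal F_2$ the marker replaces $H_1$, the component carrying both $e^\star$ and $s$, so the marker edges become leftmost around the source of $H_2'$: this source is either a fresh vertex supplied by $M_s$, making the claim immediate, or the vertex of $\{u,v\}$ that inherits the outer-face corner vacated by $e^\star$, which one checks by following the outer face of $G$ through the deletion of $H_1$.

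For the composition direction, realize $\mathcal F_1$ and $\mathcal F_2$ by upward planar drawings $\Gamma_1$ and $\Gamma_2$. In $\Gamma_1$ the marker of $H_1'$ occupies a disk whose boundary passes through $u$ and $v$, and by the leftmost condition on $\mathcal F_2$ the marker of $H_2'$ lies on the outer face of $\Gamma_2$, so that deleting it exposes the attachment of $H_2$ at $u$ and $v$ from the left. Delete both markers and glue the remaining drawing of $H_2$, obtained from $\Gamma_2$, into the hole of $\Gamma_1$, identifying $u$ and $v$. Since $\mathcal F_1$ and $\mathcal F_2$ exist, the relations governing the marker choice (whether $u<v$, and the type of $v$ in $H_1$ and in $H_2$) hold as before, so the vertical order and the interface of $H_2$ at $u,v$ in $\Gamma_2$ agree with those of the marker removed from $\Gamma_1$; after rescaling the $y$-coordinates of the $H_2$-drawing it fits into the hole, and the result is a planar drawing with $y$-monotone edges whose embedding is the composition of $\mathcal F_1$ and $\mathcal F_2$. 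Since the surgery stays inside the hole, $e^\star$ is still leftmost around $s$, so this composition is upward planar.

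The structural heart — that the four markers faithfully simulate a split component, so that upward planarity transfers across the cut in both directions — is Hutton and Lubiw's lemma, which I would cite. I expect the main obstacle to be the added bookkeeping: (a) proving that the leftmost conditions survive the surgery, especially when $s\in\{u,v\}$, where the cut and the source are adjacent and one must argue carefully about the outer-face corner at the (possibly new) source; and (b) in the composition direction, showing that an entire upward planar drawing of $H_2$ can be reshaped to fit the region freed up by the marker, respecting both the boundary of that region and the prescribed vertical order of $u$ and $v$. Both reduce to careful but routine arguments about $y$-monotone curves once the four-way case analysis on marker types is in place.
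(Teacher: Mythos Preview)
Your proposal is correct and rests on the same mathematics as the paper's proof, but the organization differs in a way worth noting. The paper does not redo the drawing surgery at all; instead it treats the theorem as an annotation exercise on Hutton and Lubiw's Theorems~6.5, 6.7, 6.8 and~6.9, walking through each of their case constructions (homeomorphic subgraphs for necessity, their Lemma~5.5 and explicit insertions for sufficiency) and pointing out at each step that the construction already preserves the rotation system and places edges where the marker was, so that the resulting embedding is precisely $\mathcal F_i$ or $\mathcal E$. The only genuinely new argument the paper supplies is for the case $u=s$, where Hutton and Lubiw's decomposition is tailored to a specific choice of cut and does not directly give the necessity of upward planarity of $\mathcal F_2$; the paper observes that a simplified rerun of the $u\neq s$ necessity argument fills this gap. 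Your plan, by contrast, presents a single unified surgery argument with the marker case analysis folded in, and handles the leftmost conditions explicitly rather than by inspection of Hutton and Lubiw's text. Your route is more self-contained and arguably cleaner to read, at the cost of reproving material already in~\cite{hutton1996upward}; the paper's route is shorter on the page but requires the reader to have~\cite{hutton1996upward} open. Be aware that the $s\in\{u,v\}$ case you flag as obstacle~(a) is exactly where the paper found Hutton and Lubiw's statements insufficient and had to add an argument, so you should expect to spend real effort there rather than treating it as routine bookkeeping.
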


        Hutton and Lubiw do not explicitly state Theorem~\ref{thm:upward-planar-configurations}.
        Instead, Theorems 6.5, 6.7, 6.8 and 6.9 in \cite{hutton1996upward} discuss the same
        situation as Theorem~\ref{thm:upward-planar-configurations}, but are only
        concerned with upward planarity, not with the embeddings involved.
        See the appendix for a detailed discussion.

        \subsection{Decomposition Trees and Upward Planar Embeddings}
        \label{ssec:decomposition-trees-and-upward-planar-embeddings}

        The approach of Hutton and Lubiw is to decompose a single-source digraph~$G$ into two smaller single-source digraphs~$G_1, G_2$ and use Theorem~\ref{thm:upward-planar-configurations} to translate upward-planarity testing of~$G$ to upward-planarity testing of two smaller instances~$H_1', H_2'$.
        Observe that the markers and the replacement rules are defined so that both~$H_1'$ and~$H_2'$ are single-source digraphs.
        This means that~$H_1'$ and~$H_2'$ can be recursively decomposed.
        Note that in the context of connectivity markers are treated simply as edges, i.e., markers are not decomposed further.
        When a graph cannot be further decomposed it is triconnected and therefore has a unique planar embedding which can be tested for upward planarity in linear time using the algorithm of Bertolazzi et al.~\cite{bertolazzi1998optimal}.
        In the context of upward planarity testing the full marker graph is considered.
        Upward planar embeddings of~$H_1'$ and~$H_2'$ can then be composed to an upward planar embedding of~$G$.
        In the context of embedding composition markers are again treated simply as edges.
        In particular, it does not matter whether the clockwise order of the edges incident to~$u$ in~$M_{uvt}$ is~$(u, v), (u, x), (u, w_t)$ or~$(u, w_t), (u, x), (u, v)$.

        We use a different approach.
        Instead of testing~$H_1'$ and~$H_2'$ for upward-planarity separately, we manage them as the skeletons of two nodes in a decomposition tree~$\mathcal T$.
        Note that Theorem~\ref{thm:upward-planar-configurations} requires~$H_1 \setminus \{u, v\}$ or~$H_2 \setminus \{u, v\}$ to be connected.
        We call such a decomposition \emph{maximal}.
        We then decompose these skeletons further, which grows the decomposition tree.
        A \emph{maximal-decomposition tree} is a decomposition tree obtained by performing only maximal decompositions.
        A \emph{configuration} equips the skeleton of each node in the tree with an upward planar embedding.
        In this embedding,~$e^\star$ or the marker that represents the component that contains~$e^\star$ must be incident to the outer face and leftmost around the source of the skeleton.
        See Fig.~\ref{fig:example}~(c) for an example of a maximal-decomposition tree.
        Applying Theorem~\ref{thm:upward-planar-configurations} at each decomposition step gives the following.

        \begin{theorem}
            Let~$G$ be a biconnected graph with a single source~$s$, let~$e^\star$ be an edge of~$G$ incident to~$s$ and let~$\mathcal T$ denote a maximal-decomposition tree of~$G$.
            Then the upward-planar embeddings of~$G$ in which~$e^\star$ is the leftmost edge around~$s$ correspond bijectively to the configurations of~$\mathcal T$.
            \label{thm:upward-planar-decomposition-tree}
        \end{theorem}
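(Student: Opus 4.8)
The plan is to prove the statement by induction on the number of nodes of~$\mathcal T$, taking as the correspondence the usual compose/decompose pair of maps: a configuration of~$\mathcal T$ is sent to the embedding of~$G$ obtained by composing all skeleton embeddings along all arcs of~$\mathcal T$ (in any order), and an upward planar embedding~$\mathcal E$ of~$G$ in which~$e^\star$ is leftmost around~$s$ is sent to the configuration obtained by contracting in~$\mathcal E$ every pertinent graph~$G(\corr_\mu(e))$ into its virtual edge~$e$. The content of the argument is that both maps are well defined---a configuration really composes to an \emph{upward} planar embedding with~$e^\star$ leftmost around~$s$, and an upward planar embedding really decomposes to an object satisfying the \emph{configuration} conditions---and that they are mutually inverse. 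If~$\mathcal T$ has a single node its skeleton is~$G$, so both claims are immediate; assume~$\mathcal T$ has at least two nodes.

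Since~$\mathcal T$ is obtained by performing only maximal decompositions and has at least two nodes, it arises from one maximal decomposition of~$G$ into~$H_1'$ and~$H_2'$ (with~$e^\star \in H_1$, which merely fixes our labeling) followed by building maximal-decomposition trees~$\mathcal T_1$ of~$H_1'$ and~$\mathcal T_2$ of~$H_2'$; thus~$\mathcal T$ is~$\mathcal T_1$ and~$\mathcal T_2$ joined by a single arc~$a$. Working with this arc, rather than an arbitrary arc of~$\mathcal T$, sidesteps the question of whether an arbitrary arc of~$\mathcal T$ induces a \emph{maximal} decomposition of~$G$. Now I chain three bijections. First, Theorem~\ref{thm:upward-planar-configurations}, together with the general fact from Section~\ref{sec:preliminaries} that composing and decomposing embeddings along an arc are mutually inverse, yields a bijection between the upward planar embeddings~$\mathcal E$ of~$G$ with~$e^\star$ leftmost around~$s$ and the pairs~$(\mathcal F_1, \mathcal F_2)$ of upward planar embeddings of~$H_1'$ and~$H_2'$ in which~$e^\star$ is leftmost around the source of~$H_1'$ and the marker edges are leftmost around the source of~$H_2'$; this bijection is realized by (de)composition along~$a$. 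Second, the induction hypothesis applied to~$\mathcal T_1$ (with~$e^\star$ as the distinguished edge) and to~$\mathcal T_2$ (with the marker representing~$H_1$ in the role of the distinguished edge---exactly the condition Theorem~\ref{thm:upward-planar-configurations} places on~$\mathcal F_2$) turns these pairs into pairs consisting of a configuration of~$\mathcal T_1$ and a configuration of~$\mathcal T_2$. Third, since every node of~$\mathcal T$ lies in exactly one of~$\mathcal T_1$ and~$\mathcal T_2$ and the per-node condition ``$e^\star$, or the marker toward the component containing~$e^\star$, is leftmost around the source and on the outer face'' is unchanged when a node is viewed inside~$\mathcal T$ or inside the subtree containing it (for the root of~$\mathcal T_2$ it becomes precisely its own root condition), configurations of~$\mathcal T$ correspond bijectively to pairs of configurations of~$\mathcal T_1$ and~$\mathcal T_2$. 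Composing the three bijections proves the theorem; moreover the composite is the compose/decompose map stated above, since the first bijection composes along~$a$ and the second along the arcs of~$\mathcal T_1$ and~$\mathcal T_2$.

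I expect the main obstacle to be the bookkeeping of the ``leftmost around the source / on the outer face'' conditions across the split: one has to check that the conditions Theorem~\ref{thm:upward-planar-configurations} imposes on~$\mathcal F_1$ and~$\mathcal F_2$ are exactly the configuration conditions at the two skeletons adjacent to~$a$, that the source of each skeleton is the pole the markers are designed to keep as the unique source, and that the marker representing~$H_1$ in~$H_2'$ really plays the structural role of a distinguished source-edge so that the induction hypothesis transfers---this is what makes the asymmetry between the~$\mathcal F_1$- and~$\mathcal F_2$-conditions necessary. The remaining ingredients are routine: that~$H_1'$ and~$H_2'$ are biconnected single-source digraphs, which follows from~$G$ being so together with the marker construction, and that~$\mathcal T_1$ and~$\mathcal T_2$ are maximal-decomposition trees of~$H_1'$ and~$H_2'$, which holds by construction.
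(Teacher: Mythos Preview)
Your proposal is correct and follows exactly the paper's approach: the paper's entire proof is the single sentence ``Applying Theorem~\ref{thm:upward-planar-configurations} at each decomposition step gives the following,'' and your induction on the number of nodes of~$\mathcal T$ is precisely the natural way to make that repeated application rigorous. Your identified obstacle---tracking the leftmost/outer-face conditions across the split and letting the marker in~$H_2'$ play the role of the distinguished edge for the inductive call---is real but is also glossed over in the paper, so you are already more careful than the original.
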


        We could use Theorem~\ref{thm:upward-planar-decomposition-tree} directly to represent all upward planar embeddings of a graph.
        But we also show that decomposition trees are uniquely defined by the decompositions that are executed, but not by the order of these decompositions.
        This means that just like we can talk about \emph{the} SPQR-tree decomposition for a graph we will be able to talk about \emph{the} UP-tree decomposition.
        The benefit of this is that we can use a UP-tree decomposition to determine that some constrained representation problem has no solution without having to consider other conceivable UP-tree decompositions.

        To prove uniqueness, we show that the order of the decompositions is irrelevant.
        We then apply the decompositions as defined by the SPQR-tree decomposition, which is unique, and obtain the unique UP-tree decomposition.
        To this end, we prove that the marker replacement rules do not depend on the order of the decompositions.
        Recall that the marker replacement rules depend on vertex dominance and the local neighborhood of certain vertices.
        We prove Lemma~\ref{lem:decompositions-preserve-vertex-dominance}, which states that decompositions preserve vertex dominance and Lemma~\ref{lem:decompositions-preserve-neighborhood}, which states that decompositions preserve the local neighborhood of certain vertices.

        \begin{restatable}[$\star$]{lemma}{lemmaDecompositionsPreserveVertexDominance}
            Let~$G$ be a biconnected single-source digraph and let~$H_1', H_2'$ denote the result of decomposing along a cutpair~$\{u, v\}$ of~$G$.
            For~$i = 1, 2$ and any two vertices~$x, y$ in~$H_i'$ it is~$x < y$ in~$H_i'$ if and only if~$x < y$ in~$G$.
            \label{lem:decompositions-preserve-vertex-dominance}
          \end{restatable}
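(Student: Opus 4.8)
The plan is to prove both directions of the equivalence $x<y$ in $H_i'$ $\iff$ $x<y$ in $G$ by analyzing how a directed path can or cannot traverse the marker. Fix $i\in\{1,2\}$ and let $j$ be the other index; write $M$ for the marker that was inserted into $H_i$ to form $H_i'=H_i\cup M$, so that $H_i'$ and $G$ agree everywhere except that the subgraph $H_j$ (glued to $H_i$ at the poles $u,v$) has been replaced by $M$ (glued at $u,v$). Since every vertex of $H_i'$ other than the two internal marker vertices lies in $H_i\subseteq G$, it suffices to understand directed reachability between such vertices, and to track what happens when a path uses the replaced part.

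For the ``only if'' direction, suppose $x<y$ in $H_i'$ via a directed path $P$. If $P$ avoids the interior of the marker, then $P$ lies entirely in $H_i\subseteq G$, so $x<y$ in $G$. Otherwise $P$ enters and leaves the marker through the poles; since $M$ is a simple digraph on $\{u,v\}$ plus at most two extra vertices, and (by inspection of Fig.~\ref{fig:markers}) the only way to pass \emph{through} a marker from one pole to the other is along a directed $u$-to-$v$ subpath of $M$, the path $P$ can only traverse $M$ in the $u\to v$ direction, and this is possible only for the markers $M_{uv}$ and $M_{uvt}$, which — by the replacement rules — are used precisely when $u<v$ in $G$. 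Hence any maximal marker-crossing subpath of $P$ from $u$ to $v$ can be rerouted through a directed $u$-to-$v$ path inside $H_j\subseteq G$ (which exists because $u<v$ in $G$), yielding a directed $x$-to-$y$ walk in $G$, and therefore $x<y$ in $G$. The markers $M_s$ and $M_t$ have no $u\to v$ directed path and contribute a new source resp.\ new sink that cannot be used to shortcut anything, so they introduce no new dominance relations at all.

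For the ``if'' direction, suppose $x<y$ in $G$ via a directed path $Q$, with $x,y\in H_i$ (the case where $x$ or $y$ is an interior marker vertex is handled separately by the explicit structure of each marker). Decompose $Q$ at its intersections with $\{u,v\}$ into subpaths. Each subpath lying in $H_i$ carries over verbatim to $H_i'$. Each subpath that enters $H_j$ must enter and leave through $u,v$; by the previous paragraph's observation, a directed subpath through $H_j$ between the poles goes from $u$ to $v$ (it cannot go $v\to u$ without creating a second source in $G$ or violating single-sourceness), so $u<v$ in $G$, so the relevant marker is $M_{uv}$ or $M_{uvt}$, each of which contains a directed $u$-to-$v$ path; replace the $H_j$-subpath by this marker path. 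Concatenating, we obtain a directed $x$-to-$y$ path in $H_i'$, so $x<y$ in $H_i'$.

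The main obstacle is the careful case analysis of the four markers in Fig.~\ref{fig:markers}: I must verify, for each marker and each pair of (possibly interior) endpoints, exactly which directed paths through the marker exist, and confirm that these match the $u$-to-$v$ reachability of the component $H_j$ they replace — in particular that $M_t$/$M_s$, used when $v$ is a source/sink of $H_j$ or when $u,v$ are incomparable, genuinely contain no ``through'' path, while $M_{uv}$/$M_{uvt}$, used when $u<v$, do. A secondary subtlety is ruling out a $v\to u$ traversal: this uses that $G$ has a single source, so a directed path from $v$ back to $u$ through $H_j$ together with a directed path from $u$ to $v$ elsewhere (as $u<v$ when markers $M_{uv},M_{uvt}$ are used) would either create a directed cycle or force a second source. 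Once these marker facts are tabulated, the path-surgery arguments above are routine.
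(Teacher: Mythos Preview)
Your argument is correct and follows the same path-rerouting idea as the paper's proof, which organizes the case analysis by marker type (incomparable vs.\ $u<v$, then by whether $v$ is a source/sink/internal in the replaced side) rather than by direction of the equivalence as you do. Two minor points to tighten: in the ``only if'' direction you only need a $u\to v$ path in $G$, not specifically in $H_j$ (when $i=2$ and the marker is $M_{uv}$, a $u\to v$ path inside $H_1$ is not guaranteed merely by $v$ being a non-source there, though one in $G$ certainly exists); and the impossibility of a $v\to u$ traversal through $H_j$ follows from acyclicity together with the labeling convention $v\not<u$, not from single-sourceness.
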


        \begin{restatable}[$\star$]{lemma}{lemmaDecompositionsPreserveNeighborhood}
            Let~$G$ be a biconnected single-source digraph and let~$H_1', H_2'$ denote the result of decomposing along a cutpair~$\{u , v\}$ of~$G$.
            For~$i = 1, 2$ let~$\{x, y\}$ denote a cutpair of~$H_i'$ that separates~$H_i'$ into~$F_1$ and~$F_2$ and~$G$ into~$D_1$ and~$D_2$.
            Then~$y$ is a source in~$F_1$ if and only if~$y$ is a source in~$D_1$.
            Moreover,~$y$ is a source, sink or internal vertex in~$F_2$ if and only if~$y$ is a source, sink or internal vertex in~$D_2$, respectively.
            \label{lem:decompositions-preserve-neighborhood}
          \end{restatable}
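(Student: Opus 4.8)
The plan is to reduce the lemma to a single structural property of the markers: a marker records, at each of its two poles, whether there is an incoming edge and whether there is an outgoing edge, exactly as the subgraph it stands for does. Fix $i\in\{1,2\}$ and let $M$ be the marker of $H_i'$; it has poles $u$ and $v$ (named, as in the marker rules, so that $u<v$ or $u,v$ are incomparable in $G$, which is acyclic) and it represents the pertinent graph $P$, with $P=H_2$ if $i=1$ and $P=H_1$ if $i=2$. Recall $H_i'=H_i\cup M$ with $M$ attached to the rest only at the poles $u,v$, and that $H_1$ contains $e^\star$ and hence the single source $s$ of $G$.

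First I would do the bookkeeping that localizes the possible discrepancies. Since markers are treated as single edges when decomposing further, $x$ and $y$ lie in $V(G)$ and $M$ lies entirely in one of the two parts, say $M\subseteq F_j$ (the degenerate case $\{x,y\}=\{u,v\}$, where one part is just $M$, I would handle separately; it again reduces to the structural property). The separation $D_1,D_2$ of $G$ induced by $F_1,F_2$ is obtained by replacing $M$ in $F_j$ by $P$; hence $D_k=F_k$ for $k\neq j$, while $F_j$ and $D_j$ both arise from a common graph $F_j\setminus M=D_j\setminus P$ by gluing on $M$, respectively $P$, along $\{u,v\}$. Therefore the edges incident to any vertex $w\notin\{u,v\}$ are the same in $F_k$ as in $D_k$, so only the incidences of $u$ and $v$ can change. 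I would also use the rooting convention that $F_1$ is the part containing $e^\star$, or the marker representing the component of $e^\star$: for $i=2$ this forces $M\subseteq F_1$, whereas for $i=1$ the marker $M$ (which represents $H_2$) may lie in either part.

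The heart of the proof is the structural property, which I would verify by inspecting Fig.~\ref{fig:markers} and invoking the Hutton--Lubiw selection rules together with two facts about $G$. Inspection of the markers: within $M_t$ the pole $v$ is a source, within $M_{uv}$ a sink, within $M_{uvt}$ internal, and within $M_s$ the pole $v$ has an incoming edge; moreover, in all four markers $u$ is a source, except in $M_s$, where $u$ has an incoming edge. For the pole $v$: if $P=H_2$, the rule selects $M_t$, $M_{uv}$, or $M_{uvt}$ precisely according to whether $v$ is a source, sink, or internal vertex of $H_2$, so the incidences of $v$ in $M$ and in $H_2$ agree completely; if $P=H_1$, the rule selects $M_t$ if $v$ is a source of $H_1$ and $M_{uv}$ (if $u<v$) or $M_s$ (if $u,v$ are incomparable) otherwise, so $v$ has an incoming edge in $M$ iff $v$ is not a source of $H_1$. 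For the pole $u$ I would use: (a) $u$ is always a source of $H_2$ --- an incoming edge $(z,u)$ of $H_2$ would have $z\in V(H_2)\setminus\{u,v\}$ (if $z=v$ then $v<u$), and a directed path from $s$ to $z$, which must enter $H_2$ through $u$ or $v$, together with $(z,u)$ yields a directed cycle or the relation $v<u$, either way impossible; and (b) in the incomparable case neither pole is a source of $H_1$ --- such a pole would also be a source of $H_2$ (by (a) if it is $u$, and by the symmetric argument if it is $v$, using incomparability) and hence be $s$, which dominates the other pole. Thus, when $P=H_2$ the incidences of $u$ in $M$ and in $H_2$ agree (both have no incoming edge and some outgoing edge), and when $M=M_s$ --- the only marker with an incoming edge at $u$ --- $u$ has an incoming edge in $P=H_1$ as well by (b).

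Finally I would assemble. As $F_k$, $D_k$, $M$, and $P$ are connected with at least two vertices, none of them has an isolated vertex, so in each a vertex is a source precisely when it has no incoming edge and its source/sink/internal type is determined by its incidences. If $M\subseteq F_1$, then $D_2=F_2$ settles the second equivalence, and for the first (which concerns only the source property) the incidences of $y$ --- which can differ between $F_1$ and $D_1$ only when $y\in\{u,v\}$ --- match by the incoming-edge part of the structural property. If $M\subseteq F_2$, which by the rooting convention happens only when $i=1$ and so $P=H_2$, then $D_1=F_1$ settles the first equivalence, and the second follows from the full structural property. One case to flag is $y=u$: for $i=2$ this forces (by an argument like (a)) that $u,v$ are incomparable, so $M=M_s$ and the incidences of $u$ match by (b); for $i=1$ the marker is never $M_s$, so they match by (a). I expect the main obstacle to be making the acyclicity and single-source arguments in (a), (b), and the last point fully rigorous --- in particular, ruling out every configuration in which a marker could misrepresent a pole --- after which the marker-by-marker inspection and the $F_k\leftrightarrow D_k$ bookkeeping are routine.
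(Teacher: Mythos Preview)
Your proposal is correct and follows essentially the same approach as the paper: both arguments verify, marker by marker, that each marker records at its poles the same incoming/outgoing incidence pattern as the subgraph it stands for, so that passing from $F_k$ to $D_k$ (which only swaps the marker for that subgraph) cannot change whether $y$ is a source, sink, or internal vertex. Your write-up abstracts this into a single ``structural property'' applied uniformly and is a bit more explicit about the sub-case where the marker of $H_1'$ lands in $F_1$ and about $y=u$ in the incomparable case, whereas the paper walks through the marker types directly inside the case split $i=1$ versus $i=2$; the underlying idea is the same.
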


        \noindent
        Lemmas~\ref{lem:decompositions-preserve-vertex-dominance} and~\ref{lem:decompositions-preserve-neighborhood} immediately give the following.

        \begin{lemma}
            Let~$G$ be a biconnected graph with a single source~$s$, let~$e^\star$ be an edge of~$G$ incident to~$s$ and let~$\mathcal T$ denote a decomposition tree of~$G$.
            Then~$\mathcal T$ relative to~$e^\star$ is uniquely defined by the decompositions regardless of their order.
        \end{lemma}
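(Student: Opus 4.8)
The plan is to show that the entire decorated structure of $\mathcal{T}$ --- its tree shape, its skeletons as directed multigraphs (including the markers on the virtual edges), its functions $\corr$, and its rooting --- is determined by $G$, $e^\star$, and the collection of (maximal) decompositions that are performed, but never by the order in which they are performed. First I would dispatch the part that does not involve the markers. The underlying undirected tree (node set, skeletons viewed as undirected multigraphs, virtual edges, $\corr$, and the induced partition of $E(G)$ among the skeletons) is order-independent by a routine argument, since carrying out a fixed collection of decompositions in two orders only reorders the refinements of one common partition of the edge set. In particular the non-virtual edge $e^\star$ lies in a uniquely determined skeleton, so the rooting of $\mathcal{T}$ at that node is forced, and with it the side of every non-root skeleton that carries the virtual edge or marker representing the $e^\star$-component. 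Hence the only datum that could conceivably depend on the order is, for each virtual edge, which of the four markers $M_s, M_t, M_{uv}, M_{uvt}$ of Hutton and Lubiw is placed on it, and it remains to show that this too is forced.

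To do this, fix a virtual edge $e$ with poles $u, v$ in a skeleton $\skel(\mu)$ and put $\nu = \corr_\mu(e)$. In any order of the decompositions, the step that creates $e$ splits some intermediate skeleton $K$ into a $\mu$-part and a $\nu$-part, and $K$ itself is reached from $G$ by a sequence of decompositions in each of which one follows the part that ultimately contains $K$. By the replacement rules of Hutton and Lubiw, the marker placed on $e$ depends only on (a) whether $u < v$ (and, if so, which of $u, v$ is the dominated pole), and (b) whether $v$ is a source, a sink, or an internal vertex in the side of the split of $K$ that does not contain the $e^\star$-representative; the marker that $\skel(\nu)$ carries for $\mu$ depends symmetrically on the other side. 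Every graph in the chain from $G$ to $K$ is a biconnected single-source digraph --- precisely what the marker construction guarantees --- so Lemma~\ref{lem:decompositions-preserve-vertex-dominance} and Lemma~\ref{lem:decompositions-preserve-neighborhood} apply at each link of the chain. Chaining Lemma~\ref{lem:decompositions-preserve-vertex-dominance} gives that $u < v$ holds in $K$ if and only if it holds in $G$; chaining Lemma~\ref{lem:decompositions-preserve-neighborhood} gives that $v$ has the same type (source, sink, or internal) in the relevant side of the split of $K$ as it does in the corresponding side of the split of $G$. Neither condition refers to the order in which the decompositions were executed, so the marker on $e$ is forced; performing this argument for every virtual edge of every skeleton completes the proof.

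The step I expect to be the main obstacle is the chaining in the second paragraph. Lemmas~\ref{lem:decompositions-preserve-vertex-dominance} and~\ref{lem:decompositions-preserve-neighborhood} are stated for a single decomposition of a single graph, so to apply them along the chain from $G$ to $K$ one must set up an induction and check at each link that the hypotheses are met --- that $\{u, v\}$ is still a cutpair separating the current graph in the way the lemmas require, and, crucially for Lemma~\ref{lem:decompositions-preserve-neighborhood}, which treats the two sides of a split asymmetrically, that the labelling of the ``$e^\star$-side'' versus the ``other side'' is transported consistently down the chain, so that the lemma is invoked on the correct side at every step. A secondary, lower-risk point is to make the first paragraph's claim about the underlying undirected tree precise; this is standard but should be spelled out, for instance by a confluence argument on transpositions of the decomposition sequence, observing that two decompositions acting on different skeletons, or on a common skeleton along non-crossing cutpairs, commute.
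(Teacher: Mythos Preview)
Your proposal is correct and follows the same approach as the paper: the paper states that the lemma ``immediately'' follows from Lemmas~\ref{lem:decompositions-preserve-vertex-dominance} and~\ref{lem:decompositions-preserve-neighborhood}, having already observed in the surrounding text that the marker replacement rules depend only on vertex dominance and on the source/sink/internal type of the relevant pole. Your write-up supplies considerably more detail than the paper does --- in particular the chaining/induction along the sequence of intermediate skeletons and the care about the asymmetric roles of the two sides in Lemma~\ref{lem:decompositions-preserve-neighborhood} --- but this is elaboration of the same argument rather than a different route.
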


        A configuration of~$\mathcal T$ can be computed as follows.
        Recall that all skeletons are single-source digraphs.
        We may therefore run the algorithm due to Bertolazzi et al.~\cite{bertolazzi1998optimal} on each skeleton.
        Observe that in a configuration of~$\mathcal T$ relative to~$e^\star$ the skeleton of each node~$\mu$ of~$\mathcal T$ must be embedded so that~$e^\star$ or the marker that corresponds to the component that contains~$e^\star$ must appear leftmost around the source of~$\skel(\mu)$.
        We can enforce this by rooting the decomposition tree constructed by the algorithm of Bertolazzi et al.~at the Q-node corresponding to~$e^\star$ or an edge of the marker that corresponds to the component that contains~$e^\star$.

        \section{UP-Trees}
        \label{sec:up-trees}

            We are ready to construct the UP-tree, a maximal-decomposition tree designed to mimic the SPQR-tree.
            Let~$G$ be a biconnected directed single-source graph.
            The base of the construction is the decomposition tree obtained by performing the same set of decompositions as in the construction of the SPQR-tree decomposition of the underlying undirected graph of~$G$.
            We then perform two additional steps.
            The first step is to split P-nodes into chains of smaller nodes.
            The second step is to determine whether skeletons of R-nodes can be reversed and to contract some arcs of the decomposition tree.
            In both steps, we reason about upward planarity of fixed embeddings with the following lemma due to Bertolazzi et al.~\cite{bertolazzi1998optimal}.

            Let~$G$ be a biconnected single-source graph together with a planar embedding.
            The \emph{face-sink graph}~$F$ of~$G$ has the vertices and faces of~$G$ as its vertices.
            It contains an undirected edge~$\{f, v\}$ if~$f$ is a face of~$G$ and~$v$ is a vertex of~$G$ that is incident to~$f$ and both edges incident to~$v$ and~$f$ are directed towards~$v$.
            The following lemma implies a linear-time algorithm that tests an embedding for upward planarity and outputs for each face whether it can be the outer face.
            \begin{lemma}[{\cite[Theorem 1]{bertolazzi1998optimal}}]
                Let~$G$ be an embedded planar single-source digraph and let~$h$ be a face of~$G$.
                Graph~$G$ has an upward planar drawing that preserves the embedding with outer face~$h$ if and only if all of the following is true:
                \begin{enumerate*}[label=(\roman*)]
                    \item graph~$F$ is a forest
                    \item there is exactly one tree~$T$ of~$F$ with no internal vertices of~$G$, while the remaining trees have exactly one internal vertex;
                    \item $h$ is in tree~$T$; and
                    \item the source of~$G$ is in the boundary of~$h$.
                \end{enumerate*}
                \label{lem:upward-planarity-fixed-embedding}
            \end{lemma}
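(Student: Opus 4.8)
The plan is to reduce the statement to the theory of planar $st$-graphs. Recall the classical result of Di~Battista and Tamassia that a planar $st$-graph --- an acyclic digraph with a single source and single sink, both incident to the outer face --- admits an upward planar drawing realising any prescribed planar embedding with the source and sink on the outer boundary, and that conversely every upward planar drawing can be completed to that of a planar $st$-graph by inserting new edges, each inside a single face. Hence $G$ (embedded, with unique source $s$) has an upward planar drawing preserving the embedding with outer face $h$ if and only if $G$ can be augmented within this embedding, keeping $h$ as the outer face, by a new sink $t$ placed inside $h$ together with new edges, each drawn inside a single face, so that the result $\hat G$ is a planar $st$-graph with poles $s$ and $t$. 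The face-sink graph $F$ is precisely the combinatorial record of which faces and sinks force such augmenting edges and of the faces along which they may be routed, so it remains to show that conditions (i)--(iv) characterise the existence of such a $\hat G$.

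For the forward direction I would fix an upward drawing with outer face $h$ and an associated augmentation $\hat G$. Condition~(iv) is immediate: the lowest point of an upward drawing lies on the outer face, and the only vertex with no edge below it is the unique source~$s$. For~(i), a cycle $v_1, f_1, v_2, f_2, \dots, v_k, f_k$ in $F$ would consist of vertices $v_i$ that are local maxima of both incident faces $f_{i-1}$ and $f_i$; since in $\hat G$ each face has a single local maximum, inside each $f_i$ the augmentation adds a monotone directed path between $v_i$ and $v_{i+1}$, and chasing these paths around the cycle forces a directed cycle in $\hat G$, which is impossible. For~(ii) and~(iii), in $\hat G$ every internal face has a unique local maximum and every internal vertex of $G$ is the local maximum of exactly one face of $\hat G$; transporting this back to $G$ and following the tree of $F$ through a fixed internal vertex shows that each tree of $F$ contains at most one internal vertex of $G$, that every internal vertex lies in its own tree, and that the one additional tree --- the one that absorbs the new sink $t$, and hence contains $h$ --- contains no internal vertex of $G$ at all.

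For the converse I would build $\hat G$ from $F$ directly. Root the special tree $T$ at the face-node $h$ and every other tree of $F$ at its unique internal vertex, orient each tree edge toward its root, and insert the new vertex $t$ into $h$. Guided by this rooting, add new edges inside the faces of $G$ so that every vertex-node of $F$ other than a root acquires exactly one new outgoing edge, drawn inside its parent face and directed along that face toward the neighbour one step closer to the root, and so that the sink-switches of $h$ receive edges into $t$. One then checks that every former sink of $G$ has gained an outgoing edge while no new source has appeared, so that $\hat G$ has $s$ as its only source (here~(iv) keeps $s$ on the outer boundary) and $t$ as its only sink; that $\hat G$ is acyclic, because each added edge runs monotonically toward its face's parent in an acyclic tree orientation while~(i) forbids the $F$-cycles along which such monotone pieces could close up; and that after the construction every internal face of $\hat G$ has exactly one local maximum and therefore exactly one local minimum. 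Thus $\hat G$ is a planar $st$-graph with outer face $h$, and applying the drawing theorem to $\hat G$ and then deleting $t$ and the added edges yields the required upward planar drawing of $G$ with outer face $h$.

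The main obstacle is the bookkeeping in the converse direction: the new edges must be routed so that planarity, acyclicity and the single-source property hold \emph{simultaneously}, that is, so that the drainage patterns prescribed independently by the several trees of $F$ meeting at a common vertex glue into a single planar $st$-graph, and so that the outer face is left with $s$ as its only local minimum --- which is the point at which one must exploit the single-source hypothesis itself, rather than $F$ alone, to dispose of the remaining source-switches of the faces. The delicate case is a vertex that is a local maximum of faces lying in several distinct trees of $F$: one has to verify that in $\hat G$ the edges around such a vertex split into one contiguous block of incoming and one contiguous block of outgoing edges, so that the rotation is upward-consistent, and it is exactly conditions~(i) and~(ii) --- $F$ a forest with at most one internal vertex per tree --- that exclude the obstructions. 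Turning this outline into a full argument is the bulk of the work.
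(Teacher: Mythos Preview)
The paper does not prove this lemma at all: it is quoted verbatim as Theorem~1 of Bertolazzi, Di~Battista, Mannino, and Tamassia~\cite{bertolazzi1998optimal} and used as a black box, so there is no in-paper proof to compare your proposal against. That said, your outline is essentially the argument given in the cited reference: one direction reads the structure of $F$ off an upward drawing (equivalently, off an augmentation to a planar $st$-graph), and the other direction roots the trees of $F$ at $h$ and at the unique internal vertices and uses this rooting to construct the augmenting edges face by face. The obstacles you flag---gluing the per-tree drainage consistently at shared vertices and controlling the source-switches on the outer face---are exactly where the work lies in~\cite{bertolazzi1998optimal}, so your plan is on target even if the present paper offers nothing to check it against.
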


            \subsection{P-Node Splits}
			\label{sec:p-node-splits}

                In SPQR-trees, the edges of P-nodes may be arbitrarily permuted.
                In decomposition trees for upward planar graphs there are stricter rules for the ordering of the markers in P-nodes.
                In this section, we determine these rules and find that by breaking up the P-nodes into chains of smaller nodes we obtain a decomposition tree for upward planarity whose P-nodes exhibit the same behavior as in SPQR-trees, i.e., their edges may be arbitrarily permuted.
                The idea is that certain kinds of markers must appear consecutively.

                First, we argue that all~$M_{uv}$ markers must appear consecutively.
                To see this, note that if~$M_s$ appears between two~$M_{uv}$ markers then the outer face is not incident to the source of the skeleton, which is vertex~$w_s$ of~$M_s$.
                If a marker~$M$ with~$M = M_t$ or~$M = M_{uvt}$ appears between two~$M_{uv}$ markers then the face incident to~$w_t$ of~$M$ and a marker~$M_{uv}$ is not connected to the outer face and not connected to an internal vertex.
                In all cases the conditions from Lemma~\ref{lem:upward-planarity-fixed-embedding} are violated.

                Moreover, all~$M_{uv}$ and~$M_{uvt}$ markers must appear consecutively.
                To see this, note that if~$M_t$ appears between two markers~$M_{uv}$ or~$M_{uvt}$ the vertex~$w_t$ of~$M_t$ cannot be connected to an internal vertex or the outer face and apply Lemma~\ref{lem:upward-planarity-fixed-embedding}.

                These observations motivate the following restructuring of P-nodes.
                Let~$\lambda$ denote a P-node obtained from the SPQR-tree.
                The \emph{parent marker} in~$\skel(\lambda)$ is the marker that corresponds to the parent node of~$\lambda$.
                If the parent marker in~$\skel(\lambda)$ is~$M_s$ all other markers must be~$M_t$.
                In this case these markers can already be arbitrarily permuted and nothing further needs to be shown.
                Otherwise the parent marker is~$M_t$ or~$M_u$ (recall that by definition of~$H_2'$ the parent marker is not~$M_{uvt}$).
                See Fig.~\ref{fig:p-node-splits}~(a) where the parent marker is~$M_t$ (the case for~$M_u$ is similar).
                \begin{figure}[t]
                    \centering
                    \includegraphics{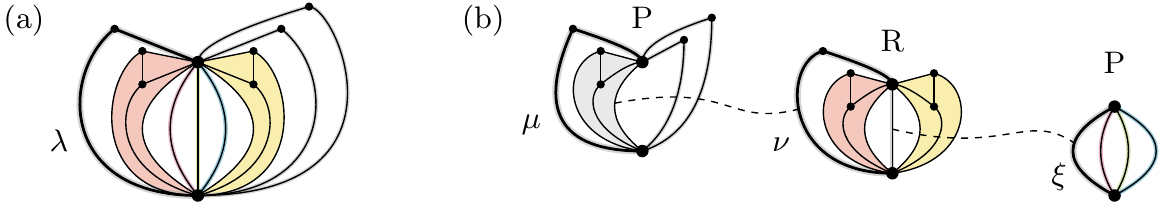}
                    \caption{
                        Splitting a P-node~$\lambda$ obtained from the SPQR-tree~(a) into a chain of smaller nodes~$\mu, \nu, \xi$~(b).
                        The bold marker represents the component that contains the edge~$e^\star$.
                    }
                    \label{fig:p-node-splits}
                \end{figure}
                Because all~$M_{uv}$ and~$M_{uvt}$ markers must appear consecutively, we create a new P-node~$\mu$ that contains the parent marker of~$\skel(\lambda)$, all~$M_t$ markers of~$\skel(\lambda)$ and a single~$M_{uvt}$ marker to represent all~$M_{uv}$ and~$M_{uvt}$ markers of~$\skel(\lambda)$.
                This marker corresponds to a new node~$\nu$ that contains all~$M_{uvt}$ markers of~$\skel(\lambda)$ and---because all~$M_{uv}$ markers must appear consecutively---a single~$M_{uv}$ marker.
                This marker corresponds to a new node~$\xi$ that contains all~$M_{uv}$ markers of~$\skel(\lambda)$.
                If~$\skel(\lambda)$ contains no~$M_{uvt}$ marker we can include a~$M_{uv}$ marker instead of a~$M_{uvt}$ marker in~$\skel(\mu)$ and connect it directly to~$\xi$, the node~$\nu$ can then be omitted.

                The new node~$\mu$ has the property that its markers can be arbitrarily permuted, i.e., it is a P-node.
                Observe that there can be at most two~$M_{uvt}$ markers in~$\skel(\lambda)$.
                This means that~$\skel(\nu)$ has at most four markers and its embedding is fixed up to reversal, i.e., it is an R-node.
                Finally, the new node~$\xi$ also has the property that its markers can be arbitrarily permuted, i.e., it is also a P-node.
                See Fig.~\ref{fig:p-node-splits}~(b) and Fig.~\ref{fig:example}~(c) and~(d) for a larger example.
                We conclude the following.

                \begin{lemma}
                    Let~$G$ be a biconnected digraph with a single source~$s$ and let~$e^\star$ denote an edge incident to~$s$.
                    There exists a decomposition tree~$\mathcal T$ that
                    \begin{enumerate*}[label=(\roman*)]
                        \item represents all upward planar embeddings of~$G$ in which~$e^\star$ is the leftmost edge around~$s$, and
                        \item the children of all P-nodes in~$\mathcal T$ can be arbitrarily permuted.
                    \end{enumerate*}
                \end{lemma}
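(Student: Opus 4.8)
The plan is to combine Theorem~\ref{thm:upward-planar-decomposition-tree} with the P-node splitting operation described above: first build a decomposition tree from the SPQR-tree whose configurations are exactly the upward planar embeddings in question, and then refine its P-nodes without changing the set of configurations.

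First I would take the SPQR-tree decomposition of the underlying undirected graph of~$G$ and carry out the same decompositions on the digraph~$G$, inserting at each step the Hutton--Lubiw marker prescribed by Lemmas~\ref{lem:decompositions-preserve-vertex-dominance} and~\ref{lem:decompositions-preserve-neighborhood}. Each split pair arising in the SPQR-tree separates~$G$ into two pertinent graphs at least one of which stays connected after deleting the pair (for instance the side whose top node is an R-, S-, or Q-node), so every such decomposition is maximal. Hence the result is a maximal-decomposition tree~$\mathcal T_0$ of~$G$ relative to~$e^\star$, and Theorem~\ref{thm:upward-planar-decomposition-tree} puts its configurations in bijection with the upward planar embeddings of~$G$ in which~$e^\star$ is leftmost around~$s$.

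Next I would refine each P-node~$\lambda$ of~$\mathcal T_0$. If its parent marker is~$M_s$, then all other markers are~$M_t$ and already permute freely, so~$\lambda$ stays as is. Otherwise the parent marker is~$M_t$ or~$M_{uv}$, and I perform the decompositions of~$\skel(\lambda)$ along its poles~$\{u,v\}$ described before the lemma, replacing~$\lambda$ by the chain~$\mu$--$\nu$--$\xi$ (or just~$\mu$--$\xi$ when~$\skel(\lambda)$ contains no~$M_{uvt}$ marker), where~$\skel(\mu)$ keeps the parent marker, all~$M_t$ markers and one marker for the block of all~$M_{uv}$ and~$M_{uvt}$ markers, $\skel(\nu)$ keeps all~$M_{uvt}$ markers and one marker for the~$M_{uv}$ block, and~$\skel(\xi)$ keeps all~$M_{uv}$ markers. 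Each of these steps is a decomposition along the cutpair~$\{u,v\}$ of the current skeleton; checking dominance and whether~$v$ is a source, sink or internal vertex on each side confirms that the markers assigned to the new virtual edges are exactly the~$M_{uvt}$ and~$M_{uv}$ markers used in the construction, so the refined object~$\mathcal T$ is again a decomposition tree of~$G$ relative to~$e^\star$.

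It remains to verify (i) and (ii) for~$\mathcal T$. For (i) I use the two consecutiveness facts established above via Lemma~\ref{lem:upward-planarity-fixed-embedding}: in any embedding of~$\skel(\lambda)$ occurring in a configuration, all~$M_{uv}$ markers are consecutive and all~$M_{uv}$ and~$M_{uvt}$ markers together are consecutive with the parent marker outside that block. Therefore such an embedding of~$\skel(\lambda)$ factors uniquely into embeddings of~$\skel(\mu)$, $\skel(\nu)$, $\skel(\xi)$, each with its own parent marker leftmost around its source, and conversely any such triple composes back to a valid embedding of~$\skel(\lambda)$; applying this at every P-node yields a bijection between the configurations of~$\mathcal T$ and those of~$\mathcal T_0$, hence between the configurations of~$\mathcal T$ and the upward planar embeddings of~$G$ with~$e^\star$ leftmost around~$s$. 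For (ii), the P-nodes of~$\mathcal T$ are the untouched P-nodes with parent marker~$M_s$, the new nodes~$\mu$, and the new nodes~$\xi$; in each case the markers can be arbitrarily permuted, as argued before the lemma, whereas every new node~$\nu$ has at most four markers and is an R-node. The main obstacle is the bijection in (i): one must track, across the refinement, which marker has to be leftmost around the source of each skeleton and show that the consecutiveness constraints on~$\skel(\lambda)$ are reproduced exactly by the independent embedding choices of~$\skel(\mu)$, $\skel(\nu)$ and~$\skel(\xi)$.
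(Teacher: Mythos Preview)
Your proposal is correct and follows essentially the same approach as the paper: start from the SPQR-tree with Hutton--Lubiw markers, invoke Theorem~\ref{thm:upward-planar-decomposition-tree} for the resulting maximal-decomposition tree~$\mathcal T_0$, and then split each P-node into the $\mu$--$\nu$--$\xi$ chain using the consecutiveness constraints established via Lemma~\ref{lem:upward-planarity-fixed-embedding}. You are somewhat more explicit than the paper about checking that the new virtual edges receive the correct Hutton--Lubiw markers and about the bijection between configurations of~$\mathcal T_0$ and~$\mathcal T$, both of which the paper leaves implicit in the phrase ``We conclude the following.''
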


            \subsection{Arc Contractions}
			\label{sec:arc-contractions}

                Recall that in SPQR-trees the skeletons of R-nodes are triconnected, i.e., their planar embedding is fixed up to reversal.
                So, every R-node offers one degree of freedom, namely, whether it has some reference embedding or the reversal thereof.
                In this section we alter our decomposition tree so that it has this same property.

                By definition the marker corresponding to the parent node is leftmost in any embedding of a skeleton.
                Hence, this marker is incident to the outer face.
                Reversing the embedding of the skeleton is equivalent to choosing the other face incident to the marker as the outer face.
                Theorem~\ref{thm:upward-planar-decomposition-tree} guarantees that any configuration of~$\mathcal T$ can be composed to an upward planar embedding.
                This means that a skeleton can be reversed if and only if both faces incident to the parent marker can be chosen as the outer face.
				This can be checked with the upward planarity test for embedded single-source graphs due to Bertolazzi et al. \cite{bertolazzi1998optimal}, which also outputs the set of faces that can be chosen as the outer face.
                If both incident faces are candidates for the outer face this node does indeed offer a degree of freedom and we leave it unchanged.
                Otherwise, if only one incident face is a candidate for the outer face this node does not offer a degree of freedom.
                We then merge it with its parent node and contract the corresponding arc in the decomposition tree.
                This leads to an R-node with a larger skeleton.

                See Fig.~\ref{fig:arc-contractions}~(a) in the appendix for an upward planar graph~$G$ and~(b) a decomposition tree thereof.
                Parts of the face sink graphs of~$\skel(\mu)$ and~$\skel(\nu)$ are shown in red, namely the two quadratic vertices dual to the faces incident to the parent marker and the edges incident to those vertices.
                One criterion for a face to be a candidate for becoming the outer face due to Bertolazzi et al.~is that there has to be a path from this face to the outer face in the face sink graph.
                This holds true for both faces incident to the parent marker in~$\skel(\nu)$, but not in~$\skel(\mu)$.
                Therefore the arc~$(\mu, \nu)$ is not contracted but the arc~$(\lambda, \mu)$ is contracted.
                This leads to the decomposition tree shown in~(c).
                See also Fig.~\ref{fig:example}~(c) and~(d) for a larger example.
                \begin{figure}[t]
                    \centering
                    \includegraphics{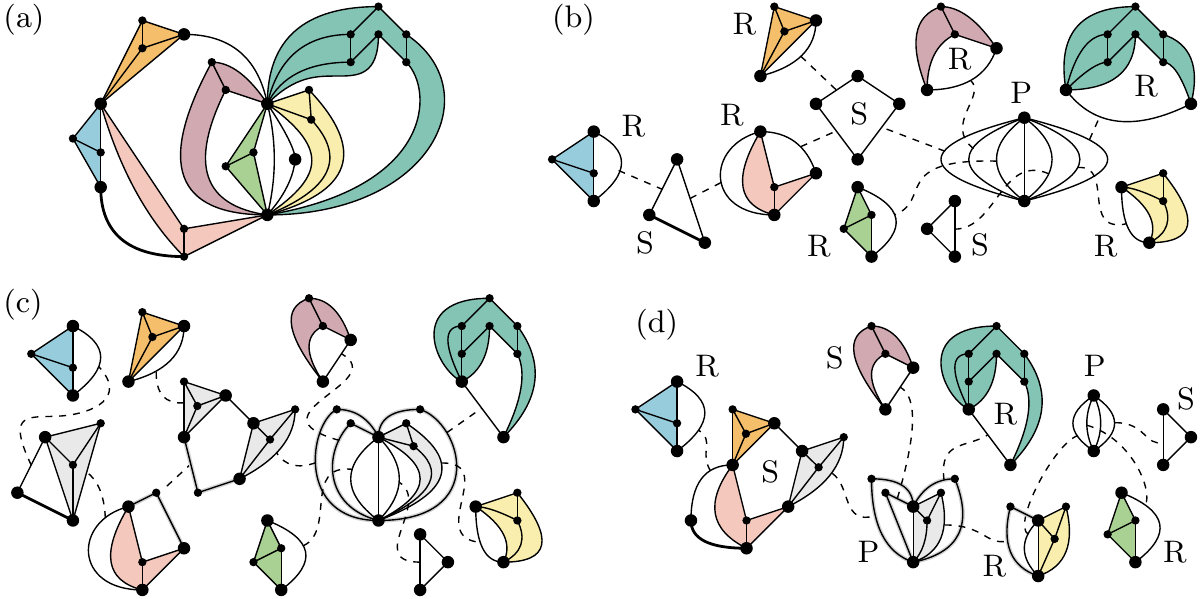}
                    \caption{
                        Construction of the UP-tree.
                        An upward planar biconnected single-source graph~(a), the SPQR-tree of its underlying undirected graph~(b) with the Q-nodes omitted, the result of replacing virtual edges with markers~(c) and the UP-tree after splitting P-nodes and contracting arcs~(d).
                    }
                    \label{fig:example}
                \end{figure}

                \begin{lemma}
                    Let~$G$ be a biconnected digraph with a single source~$s$ and let~$e^\star$ denote an edge incident to~$s$.
                    There exists a decomposition tree~$\mathcal T$ that
                    \begin{enumerate*}[label=(\roman*)]
                        \item represents all upward planar embeddings of~$G$ in which~$e^\star$ is the leftmost edge around~$s$, and
                        \item the children of all P-nodes in~$\mathcal T$ can be arbitrarily permuted.
                        \item the skeletons of all R-nodes in~$\mathcal T$ can be reversed.
                    \end{enumerate*}
                \end{lemma}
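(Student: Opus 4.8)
The plan is to start from the decomposition tree $\mathcal T_0$ provided by the lemma at the end of Section~\ref{sec:p-node-splits}, which already satisfies~(i) and~(ii), root it at the Q-node of $e^\star$ (or of one of the marker edges representing the component containing $e^\star$) as in Section~\ref{ssec:decomposition-trees-and-upward-planar-embeddings}, and then repeatedly apply the contraction of Section~\ref{sec:arc-contractions}: while there is a non-root node $\mu$ whose skeleton is triconnected but not reversible --- i.e., running the test of Lemma~\ref{lem:upward-planarity-fixed-embedding} on $\skel(\mu)$ (with the full parent marker present) shows that only one of the two faces of $\skel(\mu)$ incident to the parent marker can be the outer face --- contract the arc joining $\mu$ to its parent, merging their skeletons. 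Each step removes a node, so this terminates; let $\mathcal T$ be the result. Since $\mathcal T$ is rooted at a Q-node, no R-node of $\mathcal T$ is the root.

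Property~(i) is inherited from $\mathcal T_0$. Undoing a decomposition creates no new markers, and by Lemmas~\ref{lem:decompositions-preserve-vertex-dominance} and~\ref{lem:decompositions-preserve-neighborhood} the surviving markers are exactly those determined by vertex dominance and vertex neighborhoods in $G$ alone; moreover enlarging one side of a decomposition that is still performed preserves the connectivity condition that makes it maximal. Hence $\mathcal T$ is again a maximal-decomposition tree of $G$ relative to $e^\star$ --- the one obtained by performing the sub-multiset of the decompositions of $\mathcal T_0$ that were not contracted away --- and Theorem~\ref{thm:upward-planar-decomposition-tree} gives~(i). (Equivalently, composing the embeddings of two adjacent skeletons and decomposing an embedding of the merged skeleton along the separating cutpair are mutually inverse, so each contraction induces a bijection between the configurations before and after, compatible with fully composing a configuration into an embedding of $G$.)

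For~(ii), note that a contraction only absorbs a triconnected-skeleton node into its parent; if that parent had been a P-node, then after one of its parallel edges is replaced by a triconnected chunk its skeleton is no longer two poles joined by parallel edges, so it is no longer a P-node. Hence any node that is a P-node in $\mathcal T$ was already a P-node in $\mathcal T_0$, nothing was absorbed into it, and its skeleton and markers are unchanged; so the permutability of its children carries over from $\mathcal T_0$. For~(iii), let $\mu$ be an R-node of $\mathcal T$; it is not the root. As the contraction loop terminated and $\skel(\mu)$ is triconnected, $\mu$ is not non-reversible, i.e., both faces of $\skel(\mu)$ incident to the parent marker can be the outer face. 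Being triconnected, $\skel(\mu)$ has a unique planar embedding up to reflection, and requiring the parent marker to be leftmost around the source pins the outer face once the reflection is fixed; the two faces incident to the parent marker thus correspond to the two reflections of $\skel(\mu)$. Both are upward planar embeddings of $\skel(\mu)$ with the required leftmost edge, so by the converse part of Theorem~\ref{thm:upward-planar-configurations} (applied up the tree, cf.\ Theorem~\ref{thm:upward-planar-decomposition-tree}) each extends to a configuration of $\mathcal T$ and hence to an upward planar embedding of $G$; thus $\skel(\mu)$ can be reversed.

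The point requiring the most care is that the reversibility test is genuinely local: whether $\skel(\mu)$ can be reversed must depend only on $\skel(\mu)$ and its fixed parent marker, not on how the other skeletons are embedded. This is exactly what Theorems~\ref{thm:upward-planar-configurations} and~\ref{thm:upward-planar-decomposition-tree} supply, as they make the embedding choices at distinct nodes of a maximal-decomposition tree mutually independent, with the parent marker encoding the effect of the rest of the graph. A secondary subtlety, handled in the argument for~(ii), is that absorbing an R-node into a P-node must destroy the P-node outright rather than leave a node that is labeled a P-node but whose children fail to be freely permutable.
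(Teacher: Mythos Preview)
Your proof is correct and follows essentially the same approach as the paper: the paper's argument for this lemma is just the prose of Section~\ref{sec:arc-contractions} describing the contraction procedure, and you reproduce that procedure while supplying explicit justifications---via Theorem~\ref{thm:upward-planar-decomposition-tree} for~(i), and the observation that surviving P-nodes are untouched for~(ii)---that the paper leaves implicit.
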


                \noindent
                We call the decomposition tree~$\mathcal{T}$ the \emph{UP-tree of~$G$ relative to~$e^\star$}.

            \subsection{Computation in Linear Time}

            Let~$G$ be a biconnected digraph with a single source~$s$ and let~$e^\star$ denote an edge incident to~$s$.
            Recall that the construction of the UP-tree~$\mathcal T$ of~$G$ relative to~$e^\star$ consists of the following seven steps.
            \begin{enumerate*}
                \item Construct the SPQR-tree~$\mathcal T$ of~$G$ in linear time~\cite{hopcroft1973dividing,gutwenger2000linear}.
                \item For each pair of vertices~$u, v$ that are the poles of a marker in some skeleton of~$\mathcal T$, we have to determine whether~$u < v$ in~$G$.
                      To compute this information for all pairs in linear time, we use a union-find-based technique described by Bl\"asius et al.~\cite{blaesius2018simultaneous}.
                      Process all skeletons of~$\mathcal T$ and for every pair of poles~$u, v$ that is encountered register~$v$ as a \emph{candidate} at~$u$ and register~$u$ as a candidate at~$v$.
                      Next, initialize every vertex of~$G$ in its own singleton set.
                      Then, process each vertex~$u$ in some reverse topological order of~$G$.
                      Unify the singleton set of~$u$ with the sets of its direct descendants in~$G$.
                      Now for any candidate~$v$ stored at~$u$ we can query in whether~$u$ and~$v$ belong to the same set, which is equivalent to~$u < v$.
                      Note that the operands to all unify operations are completely determined by the structure of~$G$.
                      We exploit this fact to run the linear-time union-find algorithm due to Gabow and Tarjan~\cite{gabow1985unionfind}.
                \item For each arc~$a = (\mu, \nu)$ of~$\mathcal T$, decide whether the poles of~$a$ are sources, sinks or internal vertices in~$G(\mu)$ and~$G(\nu)$.
                      This information can be found using a simple bottom-up technique. We first
					  compute the indegree and outdegree of every node of~$G$. We
					  then perform a depth-first traversal of~$\mathcal{T}$. We
					  maintain a list of the number of incoming and outgoing edges
					  for each node seen so far, which is updated when a Q-node is visited.
					  Upon entering a subtree, we store
					  these numbers for the poles of the arc leaving the subtree at
					  the root of the subtree.
					  Upon leaving a subtree, we can now calculate
					  the differences between the current numbers and the stored numbers,
					  which gives the in- and outdegree of the poles in the graph~$G(\mu)$. Using the in- and outdegree of the poles in~$G$ computed
					  earlier, we can also compute the in- and outdegree of the poles in~$G(\nu)$.
					  This step clearly takes linear time.
                \item In each skeleton, replace all virtual edges with their respective markers.
                      With the information that was computed in the previous step and the fact that all markers have constant size this step is feasible in linear time.
                \item Construct a configuration of~$\mathcal T$ by running the linear-time upward planar embedding algorithm of Bertolazzi et al.~\cite{bertolazzi1998optimal} on every skeleton.
                      Because the size of all skeletons is linear in the size of~$G$ this step takes linear time.
                \item Perform P-node splits.
                      The running time spent on one P-node is clearly linear in the size of its skeleton.
                      This gives linear running time overall.
                \item Perform arc contractions.
                      The upward planarity test for fixed embeddings due to Bertolazzi et al.~runs in linear time.
                      Contracting an arc is feasible in constant time.
                      This gives linear running time overall.
            \end{enumerate*}


            \begin{restatable}{theorem}{thmUPTree}
                Let~$G$ be a biconnected digraph with a single source~$s$ and let~$e^\star$ denote an edge incident to~$s$.
                The UP-tree~$\mathcal T$ of~$G$ relative to~$e^\star$ is a decomposition tree whose internal nodes are
                \begin{enumerate*}[label=(\roman*)]
                    \item S-nodes whose skeletons have a fixed embedding,
                    \item R-nodes whose skeletons have a fixed embedding up to reversal, or
                    \item P-nodes where the markers can be arbitrarily permuted in the skeleton
                \end{enumerate*}
                and whose leaves are Q-nodes that offer no embedding choice.
                The configurations of~$\mathcal T$ correspond bijectively to the upward planar embeddings of~$G$ where~$e^\star$ appears leftmost around~$s$.
                Moreover,~$\mathcal T$ can be computed in linear time.
            \end{restatable}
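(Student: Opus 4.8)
The plan is to assemble the theorem from the pieces already developed in Sections~\ref{sec:decomposition-trees-and-upward-planar-embeddings} and~\ref{sec:up-trees}, verifying each of the four claims (node types, the three structural properties, the bijection, and the running time) in turn. First I would recall that the UP-tree is obtained from the SPQR-tree of the underlying undirected graph by (a) replacing virtual edges with Hutton--Lubiw markers, (b) splitting P-nodes, and (c) contracting arcs at R-nodes that offer no degree of freedom. The node-type classification then follows by tracking what each of these operations does: Q-nodes are untouched and still consist of one virtual and one non-virtual edge, offering no choice; S-nodes come from SPQR-tree S-nodes, whose skeletons are cycles, and replacing virtual edges with markers (which are treated as single edges for embedding purposes, as noted after Theorem~\ref{thm:upward-planar-decomposition-tree}) preserves the fixed embedding; R-nodes are the original triconnected SPQR-tree R-nodes together with the new R-nodes~$\nu$ created by P-node splits (which have at most four markers, hence fixed embedding up to reversal) and any R-nodes enlarged by arc contraction, all of which have skeletons that are planar with a unique embedding up to reversal; P-nodes are the nodes~$\mu$ and~$\xi$ produced by the P-node split, whose markers are arbitrarily permutable by the analysis in Section~\ref{sec:p-node-splits}.

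Next I would address the three structural freedoms. Property~(i)--(iii) (S-nodes fixed, R-nodes fixed up to reversal, P-nodes freely permutable) is exactly what the two lemmas at the ends of Sections~\ref{sec:p-node-splits} and~\ref{sec:arc-contractions} already established; here I would just note that the P-node split lemma handed us property~(ii) and the arc-contraction lemma upgraded it to additionally give property~(iii), and that neither operation destroys the other's guarantee since P-node splits do not create R-nodes that fail the reversibility check without contraction handling them, and arc contractions only merge R-nodes with their (non-P) parents. For the bijection, the key input is Theorem~\ref{thm:upward-planar-decomposition-tree}, which already gives a bijection between configurations of any maximal-decomposition tree and the upward planar embeddings of~$G$ with~$e^\star$ leftmost around~$s$. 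The UP-tree is such a maximal-decomposition tree (all decompositions along cutpairs coming from the SPQR-tree are maximal since one side is always connected, and P-node splits and arc contractions preserve this), so I would argue that P-node splits and arc contractions induce a bijection on the level of configurations: splitting a P-node partitions its embedding choices across the chain~$\mu,\nu,\xi$ without adding or losing any, and contracting an arc whose child skeleton has only one admissible outer face identifies configurations that were already forced to agree. Chaining these bijections with the one from Theorem~\ref{thm:upward-planar-decomposition-tree} yields the claim.

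Finally, for the running time I would simply cite the seven-step breakdown given immediately before the theorem statement: each step is argued there to run in linear time, using linear-time SPQR-tree construction~\cite{hopcroft1973dividing,gutwenger2000linear}, the Gabow--Tarjan union-find for dominance queries~\cite{gabow1985unionfind,blaesius2018simultaneous}, a linear bottom-up traversal for source/sink/internal classification, constant-size marker substitution, and the linear-time algorithms of Bertolazzi et al.~\cite{bertolazzi1998optimal} for both constructing skeleton embeddings and testing fixed embeddings during arc contraction; the total size of all skeletons is linear in~$|G|$, so the sum is linear.

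I expect the main obstacle to be the bijection argument, specifically making precise that P-node splits and arc contractions are bijective on configurations rather than merely surjective or injective. The subtlety is that after a P-node split the freedom to interleave the original~$M_{uv}$ and~$M_{uvt}$ markers with the~$M_t$ markers is genuinely constrained---Section~\ref{sec:p-node-splits} argued these must appear consecutively---so one must check that the chain~$\mu,\nu,\xi$ realizes \emph{exactly} the admissible orderings of the original P-node skeleton and no others; and for arc contraction one must confirm that the contracted node's skeleton, now larger, still has its embedding determined up to reversal and that the lost degree of freedom was indeed not a real choice (both incident faces of the parent marker being candidate outer faces is the precise criterion, verified via Lemma~\ref{lem:upward-planarity-fixed-embedding}). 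Handling these carefully---ideally by appealing to the already-proved lemmas at the ends of the two subsections, which encapsulate exactly this bookkeeping---is where the bulk of the work lies.
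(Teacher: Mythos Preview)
Your proposal is correct and mirrors the paper's treatment: the paper does not give a standalone proof of this theorem but presents it as a summary of the preceding subsections, so the ``proof'' is precisely the assembly you describe---the two lemmas at the ends of Sections~\ref{sec:p-node-splits} and~\ref{sec:arc-contractions} for the structural properties, Theorem~\ref{thm:upward-planar-decomposition-tree} for the bijection, and the seven-step breakdown for the running time. Your identification of the bijection bookkeeping under P-node splits and arc contractions as the delicate point is apt and, if anything, more explicit than the paper itself.
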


        \section{Partial Upward Embedding}
        \label{sec:partial-upward-embedding}

            In this section we apply the UP-tree to solve the partial upward embedding problem in quadratic time.
            A \emph{partially embedded graph} is a tuple~$(G, H, \mathcal H)$, where~$G$ is a planar graph,~$H$ is a subgraph of~$G$ and~$\mathcal H$ is a planar embedding of~$H$.
            An embedding~$\mathcal G$ of~$G$ \emph{extends} the partial embedding~$\mathcal H$ if all edges~$e, f, g$ in~$H$ that share a common endpoint~$v$ appear in the same cyclic order around~$v$ in~$\mathcal G$ and~$\mathcal H$.
            The \emph{partial embedding problem} asks whether there exists an embedding~$\mathcal G$ of~$G$ that extends~$\mathcal H$.
            Angelini et al.~solve the partial embedding problem in linear time~\cite{adfjk-tppeg-15}.
            The algorithm considers every triple of edges~$(e, f, g)$ in~$H$ that share a common endpoint~$v$ and enforces the constraints imposed by these edges in the SPQR-tree~$\mathcal T$.
            Note that~$e, f, g$ each correspond to a Q node in~$\mathcal T$.
            Because~$\mathcal T$ is a tree there is exactly one node~$\mu$ in~$\mathcal T$ so that the paths from~$\mu$ to these Q nodes are disjoint.
            The relative order of~$e, f, g$ in the embedding represented by~$\mathcal T$ is determined by the embedding of~$\skel(\mu)$.
            If~$\skel(\mu)$ offers no embedding choice (as in S nodes) determine whether the ordering of~$e, f, g$ given by~$\mathcal H$ is the same as the one given by the unique embedding of~$\skel(\mu)$.
            If not, reject the instance.
            If~$\skel(\mu)$ has two possible embeddings (as in R nodes) the ordering of~$e, f, g$ given by~$\mathcal H$ fixes one of the two embeddings of~$\skel(\mu)$ as the only candidate.
            Finally, if~$\mu$ is a P node the ordering of~$e, f, g$ given by~$\mathcal H$ restricts the set of admissible permutations of the virtual edges in~$\skel(\mu)$.
            The algorithm collects all these constraints and checks whether they can be fulfilled at the same time.

            A \emph{partially embedded upward graph} is defined as a tuple~$(G, H, \mathcal H)$, where~$G$ is an upward planar graph,~$H$ is a subgraph of~$H$ and~$\mathcal H$ is an upward planar embedding of~$H$.
            Note that UP-trees have all properties of SPQR-trees that are needed in the algorithm described above.
            In particular, the markers in P-nodes may be arbitrarily permuted, R-nodes may be reversed and all other nodes offer no embedding choice.
            Hence, we use the UP-tree as a drop-in replacement for the SPQR-tree in the algorithm of Angelini et al.~to obtain an algorithm that solves the partial upward embedding problem.
            Note that the UP-tree is rooted at some edge that must be embedded as the leftmost edge around the source of the graph.
            We may have to try a linear number of candidate edges in the worst case.
            This gives the following.

            \begin{theorem}
                The partial upward embedding problem can be solved in quadratic running time for biconnected single-source digraphs.
            \end{theorem}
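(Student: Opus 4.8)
The plan is to use the UP-tree as a drop-in replacement for the SPQR-tree in the partial-embedding-extension algorithm of Angelini et al.~\cite{adfjk-tppeg-15}, wrapped in an outer loop over the possible root edges.

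Recall that for a biconnected host graph the core of that algorithm works as follows. It builds the SPQR-tree~$\mathcal T$, and for every triple of edges~$e, f, g$ of~$H$ incident to a common vertex~$v$ it locates the unique node~$\mu$ of~$\mathcal T$ at which the tree paths towards the Q-nodes of~$e, f, g$ branch. The cyclic order of~$e, f, g$ around~$v$ is then determined solely by the embedding of~$\skel(\mu)$, which turns each such triple into a consistency check at an S- or Q-node, a fixed choice of flip at an R-node, or a restriction of the admissible permutation at a P-node; the algorithm then decides in total linear time whether all these constraints can be satisfied simultaneously, and if so outputs an embedding of~$G$ extending~$\mathcal H$. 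The only features of the SPQR-tree used here are exactly the ones that the UP-tree shares by its construction: P-nodes whose markers may be permuted arbitrarily, R-nodes that may be reversed, and S- and Q-nodes that offer no choice. Moreover, in the UP-tree the Q-nodes are still in bijection with the edges of~$G$, and the auxiliary vertices introduced by markers (such as~$x$ and~$w_t$ in~$M_{uvt}$) lie on no edge of~$H$ and are confined inside markers that are treated as single edges, so they never appear in a triple and do not interfere with the tree paths or with the rotation around any vertex of~$G$. Hence the triple-constraint machinery transfers verbatim. The crucial point is that the configurations of the UP-tree of~$G$ relative to~$e^\star$ are \emph{exactly} the upward planar embeddings of~$G$ in which~$e^\star$ is leftmost around~$s$; therefore searching inside the UP-tree automatically restricts attention to upward planar embeddings, so the rotation constraints extracted from~$\mathcal H$ are the only constraints that need to be added on top.

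It remains to deal with the fact that the UP-tree is rooted at an edge~$e^\star$ that is forced to be leftmost around the source~$s$. In any upward planar embedding of~$G$ the source lies on the outer face and all its edges are outgoing, so there is a well-defined leftmost edge around~$s$, and it is one of the~$\mathrm{outdeg}(s) = O(n)$ edges incident to~$s$. I would therefore iterate over all candidate edges~$e^\star$ incident to~$s$: for each candidate, build the UP-tree of~$G$ relative to~$e^\star$ in linear time and run the adapted algorithm of Angelini et al.~on it in linear time, accepting as soon as some candidate yields an embedding of~$G$ extending~$\mathcal H$ and rejecting if all candidates fail. This is correct because every upward planar embedding of~$G$ that extends~$\mathcal H$ has some leftmost edge around~$s$ and is hence represented by the UP-tree for the corresponding candidate, while conversely any embedding produced in some iteration is upward planar and extends~$\mathcal H$; note that the extension relation constrains only rotation systems and not the outer face, so fixing the outer face per iteration through the choice of~$e^\star$ loses no solutions. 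The running time is~$O(n)$ iterations, each linear, for~$O(n^2)$ in total.

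The black-box reuse of the algorithm of Angelini et al.~and the running-time bookkeeping are routine. The step that needs the most care is the interface argument: one must check that ``leftmost edge around~$s$'' is a well-defined combinatorial notion of an upward planar embedding, so that ranging over all edges incident to~$s$ genuinely enumerates all upward planar embeddings of~$G$, and that the structural contract between the UP-tree and the algorithm of Angelini et al.~is met in every detail---in particular that marker vertices are never among the endpoints of edges of~$H$ and hence never occur in a triple processed by the algorithm.
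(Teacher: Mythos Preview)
Your proposal is correct and follows essentially the same approach as the paper: use the UP-tree as a drop-in replacement for the SPQR-tree in the algorithm of Angelini et al., and wrap this in an outer loop over the $O(n)$ candidate edges~$e^\star$ incident to~$s$, yielding $O(n^2)$ overall. Your write-up is in fact more explicit than the paper's on several interface issues (marker vertices never appearing in triples, the extension relation constraining only rotation systems), but the underlying argument is the same.
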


    \section{Conclusion}

        We have developed the UP-tree, which is an SPQR-tree-like embedding representation for upward planarity.
        We expect that the UP-tree is a valuable tool that makes it possible to translate existing constrained planar embedding algorithms that use SPQR-trees to the upward planar setting.
        As an example, we have demonstrated how to use the UP-tree as a drop-in replacement for the SPQR-tree in the partial embedding extension problem, solving the previously open partial upward embedding extension problem for the biconnected single-source case.

    \newpage
    \appendix

    \section{Omitted Figures}

    \renewcommand\textfraction{.01}

        \begin{figure}[h]
            \centering
            \includegraphics{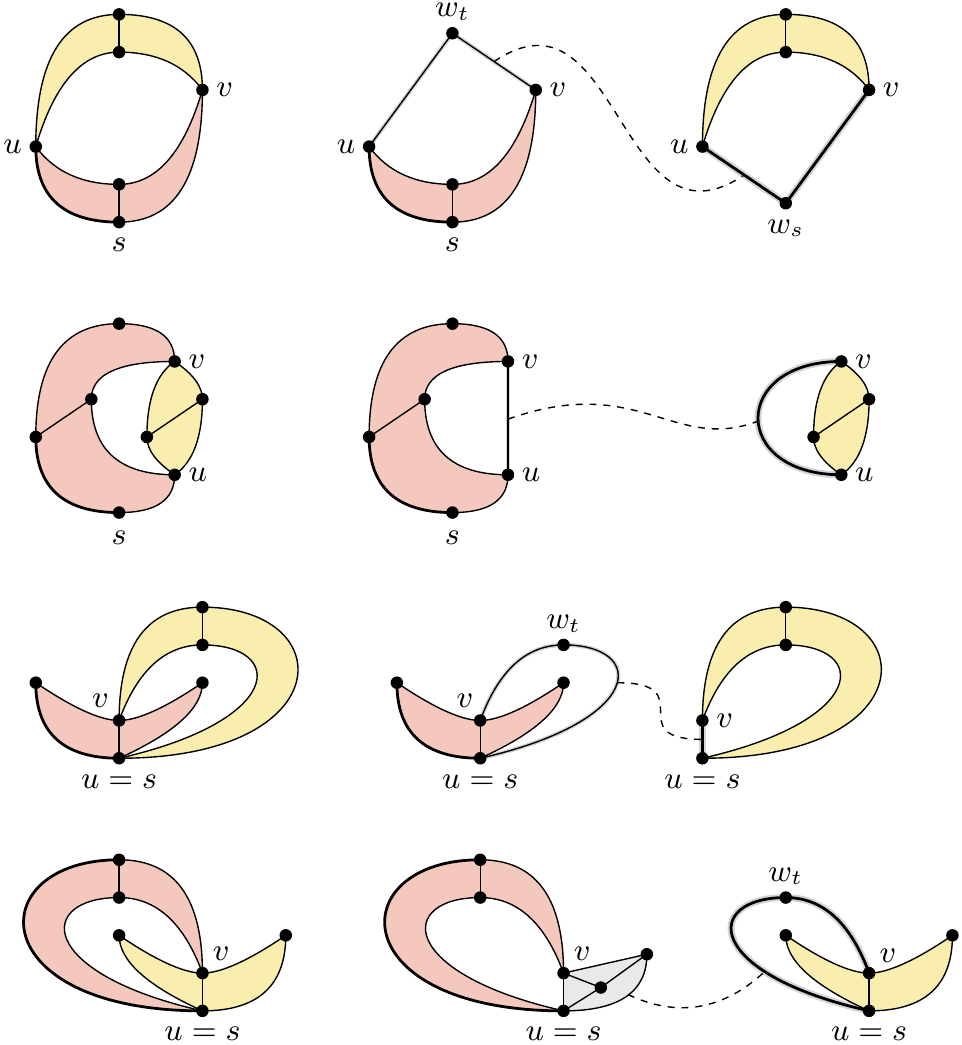}
            \caption{
                Example decompositions of upward planar graphs.
                The edge~$e^\star$ is drawn more thickly.
            }
            \label{fig:marker-replacement}
        \end{figure}

        \begin{figure}[h]
            \centering
            \includegraphics{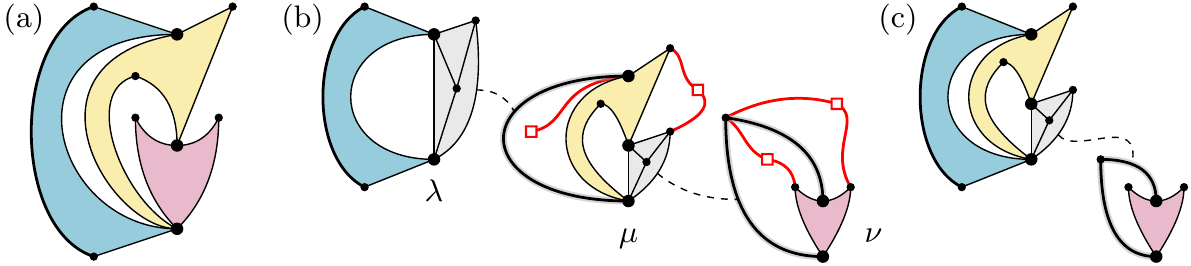}
            \caption{
                An upward planar graph~$G$~(a) with a decomposition tree~(b).
                The node~$\mu$ offers no degree of freedom so the arc~$(\lambda, \mu)$ is contracted~(c).
                The node~$\nu$ does offer a degree of freedom and is therefore not contracted.
            }
            \label{fig:arc-contractions}
        \end{figure}

    \newpage
    \section{Omitted Proofs from Section~\ref{sec:decomposition-trees-and-upward-planar-embeddings}}
    \label{sec:appendix-1}

        \theoremUpwardPlanarConfigurations*

        \begin{proof}
            We do not repeat
            all arguments of Hutton and Lubiw here, so the reader is encouraged to
            view this document and the paper by Hutton and Lubiw side-by-side.

            First, we note that the way in which Hutton and Lubiw exploit
            the fact that ~$H_1$ is a single component is their Lemma 6.2.
            However, Lemma 6.2 is completely symmetric and in particular does not
            require the isolated component to contain the source. Therefore, our
            alternative requirement of~$H_2$ being a single component is sufficient for the applications
            of Lemma 6.2 in the proofs of 6.5, 6.7, 6.8 and 6.9 in \cite{hutton1996upward}.

            It remains to see that the constructions by Hutton and Lubiw not only
            show that~$G$,~$H_1$ and~$H_2$ are upward planar, but that the upward
            planar embeddings of~$H_1$ and~$H_2$ derived from~$\mathcal{E}$ are exactly~$\mathcal{F}_1$ and~$\mathcal{F}_2$ and conversely that the upward planar
            embedding of~$G$ assembled from~$\mathcal{F}_1$ and~$\mathcal{F}_2$ is
            exactly~$\mathcal{E}$. For this, we will follow along the proof of the
            relevant theorems in \cite{hutton1996upward} and supply the necessary
            additional arguments.

            \paragraph{$u$,~$v$ incomparable.} This case is covered by Theorem 6.5
            in~\cite{hutton1996upward}. The proof of necessity finds that~$H_1'~$ and~$H_2'$ are homeomorphic to subgraphs of~$G$ and thus
            this argument shows the upward planarity of~$\mathcal{F}_1$ and~$\mathcal{F}_2$.

            The proof of sufficiency does not change any rotation system and thus
            the constructed embedding is exactly the composition of~$\mathcal{F}_1$ and~$\mathcal{F}_2$.

            \paragraph{$u<v$,~$u\neq s$.} This case is covered by Theorem 6.7
            in~\cite{hutton1996upward}. If~$v$ is not an internal vertex of~$H_2$,~$H_1'$ is homeomorphic to a subgraph of~$G$ and thus~$\mathcal{F}_1$ is
            upward planar. If~$v$ is an internal vertex of~$H_2$, then Hutton and
            Lubiw follow the upward planarity of~$H_1'$ from the fact that it has
            an upward planar subdivision that is a subgraph of~$\mathcal{E}$ and
            therefore~$\mathcal{F}_1$ must be upward planar.

            The proof of necessity of upward planarity of~$H_2'$ again finds that~$H_2'$ is homeomorphic to a subgraph of~$G$ up to some contractions which
            only happen inside the marker and thus do not change the fact that
            the upward planar embedding found for~$H_2'$ is precisely~$\mathcal{F}_2$.

            The proof of sufficiency consists of three cases. In case 1, the construction
            inserts~$H_2$ into~$H_1$ in such a way that the outgoing edges of~$u$ and~$v$ in~$H_2$ are inserted where the marker ~$M_t$ is located in~$H_1$.
            Therefore,~$\mathcal{E}$ is exactly the composition of~$\mathcal{F}_1$ and~$\mathcal{F}_2$. In case 2,~$\mathcal{F}_1$ and~$\mathcal{F}_2$ are combined
            using Lemma 5.5, which just composes~$\mathcal{F}_1$ and~$\mathcal{F}_2$ in
            our sense, yielding~$\mathcal{E}$.

            Case 3 consists of two subcases. The first case uses Lemma 5.5 and a contraction
            to obtain~$G$ and thus constructs exactly~$\mathcal{E}$ for reasons analogous to
            those in case 1. The second case again uses Lemma 5.5 and constructs the embedding
            such a way that the embeddings of~$H_1$ and~$H_2$ remain unchanged and the
            edges of~$H_2$ are inserted at the position of the marker in~$H_1$.

            \paragraph{$u=s$.} This case is covered by Theorems 6.8 and 6.9 in \cite{hutton1996upward}.
            Because Hutton and Lubiw operate under somewhat weaker assumptions compared
            to the present document, they have to rely on a specific choice of decomposition
            allowing for an explicit upward planarity test to continue decomposing.
            For the purposes of Theorem~\ref{thm:upward-planar-configurations}, this
            is insufficient. The missing argument is the necessity of the upward
            planarity of~$\mathcal{F}_2$ if~$\mathcal{E}$ is upward planar. Once this
            is established, the rest of the proof is given in Theorem 6.8.

            For this case, a simplified version of the necessity of the second
            condition of Theorem 6.7 in \cite{hutton1996upward} works. More precisely,
            we necessarily have~$z=u$ and the case where~$u$ and~$v$ are incomparable
            in~$E=H_1$ will never occur. The same arguments given above for why
            this construction yields exactly~$\mathcal{F}_1$ apply.

            For~$\mathcal{F}_1$, the proof of necessity in Theorem 6.8 applies\footnote{Note
            that Hutton and Lubiw essentially switch the roles of~$E$ and~$F$ for this proof.},
            but this proof performs the same construction as the necessity of the second
            condition in the proof of Theorem 6.7, and thus yields exactly~$\mathcal{F}_1$
            using the arguments already given above.

            The proof of sufficiency remains unchanged, and thus, once again, the
            arguments from the case where~$u\neq s$ still apply.
            \qed

        \end{proof}

        \lemmaDecompositionsPreserveVertexDominance*
        
                \begin{proof}
            Recall that the edges of~$G$ are partitioned across~$H_1$ and~$H_2$ and that~$v$ is a source in~$H_2$.
            \begin{enumerate}
                \item $u$ and~$v$ are incomparable in~$G$.
                      Let~$x, y$ be two vertices in~$H_1'$ with~$x < y$ in~$G$.
                      Because~$u$ and~$v$ are incomparable in~$G$ the directed path from~$x$ to~$y$ in~$G$ cannot use an edge from~$H_2$, i.e., it consists entirely of edges in~$H_1$, i.e., it is~$x < y$ in~$H_1'$.
                      Now let~$x, y$ be two vertices in~$H_1'$ with~$x < y$ in~$H_1'$.
                      Because~$M_t$ contains no directed path between~$u$ and~$v$ this path consists entirely of edges in~$H_1$.
                      Then this same path exists in~$G$, i.e., it is~$x < y$ in~$G$.

                      A symmetric argument using the fact that~$M_s$ contains no directed path between~$u$ and~$v$ works for the case when~$x, y$ are vertices in~$H_2'$.
                \item $u < v$ in~$G$.
                      Let~$x, y$ be two vertices in~$H_1'$.
                      Follow the construction rules for~$H_1'$.
                      \begin{enumerate}
                          \item $v$ is a source in~$H_2$.
                                Recall~$H_1' = H_1 \cup M_t$.
                                Assume~$x < y$ in~$H_1'$.
                                Because~$M_t$ has no directed path between~$u$ and~$v$ any directed path from~$x$ to~$y$ in~$H_1'$ cannot use edges from~$M_t$.
                                This means that such a path consists entirely of edges in~$H_1$, i.e., it also exists in~$G$.
                                Hence, it is~$x < y$ in~$G$.
                                Now assume~$x < y$ in~$G$.
                                Because~$v$ is a source in~$H_2$ there exists no directed path from~$u$ to~$v$ in~$H_2$.
                                From~$u < v$ in~$G$ it follows that there exists a directed path from~$u$ to~$v$ in~$H_1$ and therefore in~$H_1'$.
                                Hence, it is~$x < y$ in~$H_1'$.
                          \item $v$ is a sink in~$H_2$.
                                Because~$G$ is a single-source graph and it is~$u < v$ there exists a directed path~$p$ from~$u$ to~$v$ in~$H_2$.
                                Recall~$H_1' = H_1 \cup M_{uv}$.
                                Any directed path from~$x$ to~$y$ in~$H_1'$ either uses only edges in~$H_1$ in which case the same path exists in~$G$, or it uses the edge~$M_{uv}$ in which case it can be modified to a path that uses~$p$ in~$G$.
                                The same argument works in the reverse direction.
                          \item $v$ is an internal vertex in~$H_2$.
                                Recall~$H_1' = H_1 \cup M_{uvt}$ and apply the same argument as in the previous case.
                      \end{enumerate}

                      Now let~$x, y$ be two vertices in~$H_2'$ and follow the construction rules for~$H_2'$.
                      \begin{enumerate}
                          \item $v$ is a source in~$H_1$.
                                Recall~$H_2' = H_2 \cup M_t$ and follow the symmetric case~$H_1' = H_1 \cup M_t$.
                          \item $v$ is not a source in~$H_1$.
                                Recall~$H_2' = H_2 \cup M_{uv}$ and follow the symmetric cases~$H_1' = H_1 \cup M_{uv}$ or~$H_1' = H_1 \cup M_{uvt}$.
                    \qed
                      \end{enumerate}
            \end{enumerate}
        \end{proof}

\lemmaDecompositionsPreserveNeighborhood*

          \begin{proof}
            Distinguish the cases where~$\{x, y\}$ is a cutpair in~$H_1'$ or~$H_2'$.
            \begin{enumerate}
                \item $\{x, y\}$ is a cutpair in~$H_1'$.
                      Then~$F_1 = D_1$ and~$F_2$ is obtained from~$D_2$ by replacing~$H_2$ with the appropriate marker~$M$.
                      Equivalently,~$D_2$ is obtained from~$F_2$ by replacing the marker in~$F_2$ with~$H_2$.
                      Note that if~$M$ is not adjacent to~$y$ the neighborhood of~$y$ remains unchanged and nothing further needs to be shown.
                      Otherwise~$M$ is adjacent to~$y$, i.e., it is~$y = u$ or~$y = v$.

                      Consider the case~$y = v$.
                      If~$M = M_t$ vertex~$y = v$ is a source in~$H_2$ and so exchanging~$M = M_t$ with~$H_2$ exchanges one outgoing edge with a non-empty set of outgoing edges.
                      If~$M = M_{uv}$ vertex~$y = v$ is a sink in~$H_2$ and so exchanging~$M = M_{uv}$ with~$H_2$ exchanges one incoming edge with a non-empty set of incoming edges.
                      Finally, if~$M = M_{uvt}$ vertex~$y = v$ is an internal vertex in~$H_2$ and so exchanging~$M = M_{uvt}$ with~$H_2$ exchanges three outgoing and three incoming edges with non-empty sets of outgoing and incoming edges.

                      Now consider the case~$y = u$.
                      By definition~$u$ is a source in~$H_2$ and in all candidate markers.

                      Thus, when going from~$F_2$ to~$D_2$ and vice versa only edges of the same kind are exchanged.
                      Because~$F_1 = D_1$ vertex~$y$ is a source in~$F_1$ if and only if~$y$ is a source in~$D_1$.
                      This shows the claim for~$i = 1$.
                \item $\{x, y\}$ is a cutpair in~$H_2'$.
                      Then~$F_2 = D_2$ and~$F_1$ is obtained from~$D_1$ by replacing~$H_1$ with the appropriate marker~$M$.
                      Equivalently,~$D_1$ is obtained from~$F_1$ by replacing the marker in~$F_1$ with~$H_1$.
                      Note that if~$M$ is not adjacent to~$y$ the neighborhood of~$y$ remains unchanged and nothing further needs to be shown.
                      Otherwise~$M$ is adjacent to~$y$, i.e., it is~$y = u$ or~$y = v$.

                      Consider the case~$y = v$.
                      If~$v = y$ is a source in~$H_1$ it is~$M = M_t$ and exchanging~$M = M_t$ with~$H_1$ exchanges one outgoing edge with a non-empty set of outgoing edges.
                      Otherwise~$v = y$ is not a source in~$H_1$ and it is~$M = M_{uv}$.
                      From~$M = M_{uv}$ it follows that~$v = y$ is not a source in~$F_1$.
                      Because~$v = y$ is not a source in~$H_1$ it has at least one incoming edge~$e$ in~$H_1$.
                      Further,~$e^\star$ lies in~$D_1$ by definition and because~$H_1 \subset D_1$ edge~$e$ lies in~$D_1$, i.e.,~$v = y$ is not a source in~$D_1$.

                      The case~$y = u$ cannot occur because~$u$ is the source of~$H_2'$.
                      Because~$F_2 = D_2$ vertex~$y$ is a source in~$F_2$ if and only if~$y$ is a source in~$D_2$.
                      This shows the claim for~$i = 2$.
                      \qed
            \end{enumerate}
        \end{proof}

\end{document}